\newcommand{\choice}[0]{\zeta}
\newcommand{\elcap}[0]{\cap}
\newcommand{\abs}[4]{{#1}\, #2\! : \! #3.\, #4}
\newcommand{\absu}[3]{{#1}\, #2.\, #3}
\mathchardef\mhyph="2D 
\newcommand{\interp}[1]{\llbracket #1 \rrbracket} 
\newcommand{\leadstoc}[0]{\ensuremath{\leadsto_{\mathbf{n}}}}
\newcommand{\leadstocs}[0]{\ensuremath{\leadsto_{\mathbf{n}}^*}}
\newcommand{\nleadstoc}[0]{\ensuremath{\nleadsto_{\mathbf{n}}}}
\newcommand{\tpcheck}[0]{\Leftarrow}
\newcommand{\tpsynth}[0]{\Rightarrow}
\newcommand{\tpsynthleads}[0]{\ensuremath{\overset{\leadstocs}{\Rightarrow}}}
\newcommand{\cbe}[0]{c\beta\eta}
\newcommand{\startcase}[1]{\vspace{#1} \noindent\textbf{\underline{Case:}}}
\newtheorem{theorem}{Theorem}
\newtheorem{definition}[theorem]{Definition}
\newtheorem{lemma}[theorem]{Lemma}
\newtheorem{corollary}[theorem]{Corollary}
\newtheorem{observation}[theorem]{Observation}
\begin{document}

\title{Syntax and Semantics of Cedille}

\author{Aaron Stump, Chris Jenkins \\
  Computer Science \\
  The University of Iowa \\
\texttt{aaron-stump@uiowa.edu}}

\date{}

\maketitle

\section{Introduction}

The type theory of Cedille is called the Calculus of Dependent Lambda
Eliminations (CDLE).  This document presents the version of CDLE as of
April 13, 2021.  We have made many changes from the first paper on CDLE~\cite{stump17},
mostly in the form of dropping constructs we discovered (to our
surprise) could be derived~\cite{stump18}. We have also omitted
\emph{lifting} -- a technique for large eliminations with lambda
encodings -- in this document's version of CDLE.  Some uses of lifting
can be simulated other ways within the system, though the limits of
this are still under investigation.  We also include a construct
$\delta$, for deriving a contradiction from a proof that
lambda-encoded true equals lambda-encoded false.  This also
compensates somewhat for the lack of lifting.

At a high level, CDLE is an extrinsic (i.e., Curry-style) type theory
extending the Calculus of Constructions with three additional
constructs, which allow deriving induction principles within the
theory for lambda encodings of inductive datatypes.  The goal is to support
usual idioms of dependently typed programming and proving as in Agda or
similar tools, but using pure lambda encodings for all data, and
requiring a much smaller core theory suitable for formal verification.

The current Cedille implementation of CDLE extends with a number of features
intended to make programming in the system more convenient and with less
redundancy.
These features all compile away to a slightly simplified version of the
theory presented in this document, called Cedille Core, described
here: \url{https://github.com/astump/cedille-core-spec}.
At the time of writing, the Cedille implementation lags behind the formulation
of CDLE in this document on one point: the treatment of the rewriting construct
\(\rho\).
In the implementation, the fully-annotated form for this construct is \(\rho\ t\
@x.T'\ \mhyph\ t'\), but in this document it is \(\rho\ t\ @x\langle
t_2 \rangle.T'\ \mhyph\ t'\).
The implementation infers the additional subterm \(t_2\).

\section{Classification Rules}

The classification rules are given in
Figures~\ref{fig:superknd},~\ref{fig:knd}, and~\ref{fig:tp}, with
Figure~\ref{fig:ctxt} giving the context formation rules.
For brevity, we take these figures as implicitly specifying the syntax of
contexts \(\Gamma\), kinds $\kappa$, types $T$, and annotated terms $t$; these
may use term variables $x$ and type variables $X$, which we assume come from
distinct sets.  So terms and types are syntactically distinguished.
We follow the syntax of our implementation
Cedille, which distinguishes application of a term or type $e$ to a
type ($e \cdot T$), from application to a term ($e\ t$), and
application to an erased term argument ($e\ \mhyph t$, in which case \(e\) must
be a term).
Note that center dot (\(\cdot\)) is also used to denote the empty type context;
since the usage for typing contexts always has the symbol occur to the left of
the turnstile (\(\vdash\)), no confusion should arise from overloading notation.

The typing rules (Figure~\ref{fig:tp}) are
bidirectional~\cite{pierce+00}, while the kinding rules (Figure~\ref{fig:knd})
are unidirectional (synthesizing only) and the kind-formation rules
(Figure~\ref{fig:superknd}) have no notion of directionality.
We write $\Leftrightarrow$ to range over
$\{\tpcheck,\tpsynth\}$, and when this symbol occurs multiple times in a rule,
it is intended that such occurrences be read the same way (i.e., read the
occurrences as either all \(\tpcheck\) or all \(\tpsynth\)).
The rules are intended to be read bottom-up as an algorithm (in a standard way,
c.f.~\cite{peytonjones07,Pfe01_Lecture-Notes-on-Bidirectional-Type-Checking})
for synthesizing a classifier from a context and an expression (\emph{type
  synthesis}, \(\tpsynth\)) or checking an expression against a classifier in a
context (\emph{type checking}, \(\tpcheck\)).
Since variables have their type synthesized, and since we sometimes wish to
substitute a variable with a term whose type can only be checked, we define a
shorthand notation: \([t/x]^T\) means \([\chi\ T\ \mhyph\ t/x]\), where \(\chi\)
is the construct for explicit type annotations (see Section~\ref{sec:overview}).

In typing rules, when the type of an introduction form is checked or the type of an
elimination form is synthesized, we use call-by-name normalization, written
\(\leadstoc\) and \(\leadstocs\) for the reflexive transitive closure, to put
types in weak head normal form, revealing type constructors.
We abbreviate the conjunction ``term \(t\) synthesizes some type, and that type
call-by-name reduces to another type'' with the symbol \(\tpsynthleads\),
defined formally at the top of Figure~\ref{fig:tp}.
When a redex (reducible expression) occurs such that the argument to a type is a
term, such as \((\abs{\lambda}{x}{T_1}{T_2})\ t_1\), the reduction uses
substitution with annotations: \([t_1/x]^{T_1}\ T_2\).

Call-by-name reduction is \emph{not} what underpins the
equivalence relation \(\cong\) for types and kinds, which is full
\(\beta\)-equivalence (for types) and \(\beta\eta\)-equivalence (for terms),
both of which are modulo erasure of annotations in terms.
The erasure operation is defined in Figure~\ref{fig:eraser}, and is essentially
the \emph{extraction} function for the Implicit Calculus of Constructions given
by~\cite{BB08_ICC-as-a-Programming-Language} adapted to CDLE.
To understand the role of erasure, recall that the type theory is \emph{extrinsic} (a.k.a.\
Curry-style), and hence we only consider erasures \(|t|\) of terms when testing
convertibility.
This is lifted to the conversion relation on types \(T \cong T'\) and kinds
\(\kappa \cong \kappa'\), whose rules are given in
Figure~\ref{fig:conv}.

\begin{figure}
  \[
  \begin{array}{ccc}
    \infer{\Gamma \vdash \star}{\ } &
    \infer{\Gamma\vdash\abs{\Pi}{x}{T}{\kappa}}{\Gamma \vdash T \tpsynth \star & \Gamma,x:T\vdash\kappa} &
    \infer{\Gamma\vdash\abs{\Pi}{X}{\kappa'}{\kappa}}{\Gamma \vdash \kappa' & \Gamma,X:\kappa'\vdash\kappa}
  \end{array}
  \]
  \caption{Rules for checking that a kind is well-formed ($\Gamma \vdash \kappa$)}
  \label{fig:superknd}
\end{figure}

\begin{figure}
  \[
    \begin{array}{c}
      [t/x]^T\ =\ [\chi\ T\ \mhyph\ t/x]
      \\ \\ \\
  \begin{array}{cc}
    \infer{\Gamma \vdash X \tpsynth \kappa}{(X : \kappa) \in \Gamma} &
    \infer{\Gamma\vdash \abs{\forall}{X}{\kappa}{T} \tpsynth \star}{\Gamma \vdash \kappa & \Gamma,X:\kappa\vdash T \tpsynth \star} \\ \\
    \infer{\Gamma\vdash\abs{\forall}{x}{T}{T'} \tpsynth \star}{\Gamma \vdash T \tpsynth \star & \Gamma,x:T\vdash T' \tpsynth \star} &
    \infer{\Gamma\vdash\abs{\Pi}{x}{T}{T'} \tpsynth \star}{\Gamma \vdash T \tpsynth \star & \Gamma,x:T\vdash T' \tpsynth \star} \\ \\
    \infer{\Gamma\vdash\abs{\lambda}{x}{T}{T'} \tpsynth \abs{\Pi}{x}{T}{\kappa}}{\Gamma \vdash T \tpsynth \star & \Gamma,x:T\vdash T'\tpsynth\kappa} &
    \infer{\Gamma\vdash\abs{\lambda}{X}{\kappa}{T'} \tpsynth \abs{\Pi}{X}{\kappa}{\kappa'}}{\Gamma \vdash \kappa & \Gamma,X:\kappa\vdash T'\tpsynth\kappa'} \\ \\
    \infer{\Gamma\vdash T\ t \tpsynth [t/x]^{T'}\kappa}{\Gamma\vdash T \tpsynth \abs{\Pi}{x}{T'}{\kappa} & \Gamma\vdash t \tpcheck T'} &
    \infer{\Gamma\vdash T_1 \cdot T_2 \tpsynth [T_2/X]\kappa_1}{\Gamma\vdash T_1 \tpsynth \abs{\Pi}{X}{\kappa_2}{\kappa_1} & \Gamma\vdash T_2 \tpsynth \kappa_2' & \kappa_2 \cong \kappa_2'} \\ \\
    \infer{\Gamma\vdash\abs{\iota}{x}{T}{T'} \tpsynth \star}{\Gamma \vdash T \tpsynth \star & \Gamma,x:T\vdash T' \tpsynth \star} &
    \infer{\Gamma\vdash \{ t \simeq t' \} \tpsynth \star}{\textit{FV}(t\ t')\subseteq\textit{dom}(\Gamma)}
  \end{array}
    \end{array}
  \]
  \caption{Rules for synthesizing a kind for a type ($\Gamma \vdash T \tpsynth \kappa$)}
  \label{fig:knd}
\end{figure}

\begin{figure}
  \[
    \begin{array}{c}
      \Gamma \vdash t \tpsynthleads T\ =\ \exists T'.\ (\Gamma
      \vdash t \tpsynth T') \land (T' \leadstocs T)
      \\ \\ \\
  \begin{array}{cc}
    \infer{\Gamma\vdash x\tpsynth T}{(x : T)\in\Gamma} &
    \infer{\Gamma\vdash t\tpcheck T}{\Gamma\vdash t\tpsynth T' & T' \cong T} \\ \\
    \infer{\Gamma\vdash \absu{\lambda}{x}{t} \tpcheck T}{T \leadstocs \abs{\Pi}{x}{T_1}{T_2} & \Gamma,x:T_1\vdash t\tpcheck T_2} &
    \infer{\Gamma\vdash t\ t' \tpsynth [t'/x]^{T'}T}{\Gamma\vdash t \tpsynthleads \abs{\Pi}{x}{T'}{T} & \Gamma\vdash t' \tpcheck T'} \\ \\

    \infer{\Gamma\vdash \absu{\Lambda}{X}{t} \tpcheck T'}
          {T' \leadstocs \abs{\forall}{X}{\kappa}{T} & \Gamma,X:\kappa\vdash t \tpcheck T} &
    \infer{\Gamma\vdash t \cdot T' \tpsynth [T'/X]T}
          {\Gamma\vdash t \tpsynthleads \abs{\forall}{X}{\kappa}{T} & \Gamma\vdash T' \tpsynth \kappa' & \kappa'\cong\kappa} \\ \\

    \infer{\Gamma\vdash \absu{\Lambda}{x}{t} \tpcheck T}
          {T \leadstocs \abs{\forall}{x}{T_1}{T_2} & \Gamma,x:T_1\vdash t \tpcheck T_2 & x\not\in\textit{FV}(|t|)} &
    \infer{\Gamma\vdash t\ \mhyph t' \tpsynth [t'/x]^{T'}T}{\Gamma\vdash t \tpsynthleads \abs{\forall}{x}{T'}{T} & \Gamma\vdash t' \tpcheck T'} \\ \\

    \infer{\Gamma\vdash [ t_1 , t_2 ] \tpcheck T}
          {
           \begin{array}{cc}
             T \leadstocs \abs{\iota}{x}{T_1}{T_2}
             & \Gamma \vdash t_1 \tpcheck T_1
             \\ \Gamma \vdash t_2 \tpcheck [t_1/x]^{T_1}\ T_2
             & |t_1| =_{\beta\eta} |t_2|
           \end{array}
          } &
    \infer{\Gamma\vdash t.1 \tpsynth T}{\Gamma\vdash t \tpsynthleads \abs{\iota}{x}{T}{T'}} \\ \\
    \infer{\Gamma\vdash t.2 \tpsynth [t.1/x] T'}{\Gamma\vdash t \tpsynthleads \abs{\iota}{x}{T}{T'}} &
    \infer{\Gamma\vdash \beta\{t'\} \tpcheck T}
          {T \leadstocs \{t_1 \simeq t_2\} \quad \textit{FV}(t')\subseteq \textit{dom}(\Gamma) \quad |t_1| =_{\beta\eta} |t_2|}  \\ \\    
    \infer{\Gamma\vdash \delta\ \mhyph\ t \tpcheck T}
          {
            \Gamma \vdash t \tpsynth T'
            \quad T' \cong \{\absu{\lambda}{x}{\absu{\lambda}{y}{x}} \simeq \absu{\lambda}{x}{\absu{\lambda}{y}{y}}\}
          } &
    \infer{\Gamma \vdash \rho\ t\ @x\langle t_2 \rangle.T'\ \mhyph\ t' \tpcheck T}
          {
          \begin{array}{ccc}
            \Gamma \vdash t \tpsynthleads \{t_1 \simeq t_2'\}
            & \textit{FV}(t_2) \subseteq \textit{dom}(\Gamma)
            & |t_2'| =_{\beta\eta} |t_2|
            \\ \Gamma \vdash [t_2/x] T' \tpsynth \star
            & \Gamma \vdash t' \tpcheck [t_2/x]T'
            & [t_1/x]T' \cong T
          \end{array}
          } \\ \\
    \infer{\Gamma \vdash \chi\ T\ \mhyph\ t\ \tpsynth T}
          {\Gamma \vdash T \tpsynth \star & \Gamma \vdash t \tpcheck T} &
    \infer{\Gamma\vdash \varphi\ t\ \mhyph\ t'\ \{t''\} \Leftrightarrow T}
          {
              \Gamma\vdash t\tpcheck \{t' \simeq t''\}
              & \Gamma\vdash t' \Leftrightarrow T
              & \textit{FV}(t'') \subseteq \textit{dom}(\Gamma)
          }
  \end{array}
    \end{array}
  \]
\caption{Rules for checking a term against a type ($\Gamma \vdash t \tpcheck T$)
           and synthesizing a type for a term ($\Gamma \vdash t \tpsynth T$)}
\label{fig:tp}
\end{figure}

\begin{figure}
  \centering
  \[
    \begin{array}{lll}
      \infer{\vdash \cdot}{}
      & \infer{
        \vdash \Gamma,x:T
        }{\Gamma \vdash T \tpsynth \star}
      & \infer{
        \vdash \Gamma,X:\kappa
        }{
        \Gamma \vdash \kappa
        }
    \end{array}
  \]
  \caption{Rules for checking a context is well-formed}
  \label{fig:ctxt}
\end{figure}

\begin{figure}
  \[
  \begin{array}{lllllll}
    |x| & = & x &\ &
    |\absu{\lambda}{x}{t}| & = & \absu{\lambda}{x}{|t|} \\
    |t\ t'| & = & |t|\ |t'| &\ &
    |t\cdot T| & = & |t| \\
    |\absu{\Lambda}{x}{t}| & = & |t| &\ &
    |t\ \mhyph t'| & = & |t| \\
    |[t , t']| & = & |t| &\ &
    |t.1| & = & |t| \\
    |t.2| & = & |t| &\ &
    |\beta\{t\}| & = & |t|\\
    |\delta\ \mhyph\ t| & = & \absu{\lambda}{x}{x}&\ &
    |\rho\ t\ @x\langle t_2 \rangle.T'\ \mhyph\ t'| & = & |t'| \\
    |\varphi\ t\ \mhyph\ t'\ \{t''\}| & = & |t''| &\ &
    |\chi\ T\ \mhyph\ t| & = & |t|
  \end{array}
  \]
  \caption{Erasure for annotated terms}
  \label{fig:eraser}
\end{figure}  

\begin{figure}
  \[
    \begin{array}{c}
      \fbox{\(T_1 \cong T_2\)}
      \quad \fbox{\(T_1 \cong^{\text{t}} T_2\)}
      \\ \\
      \infer{
       T_1 \cong T_2
      }{
       T_1 \leadstocs T_1' \nleadstoc
       \quad T_2 \leadstocs T_2' \nleadstoc
       \quad T_1' \cong^{\text{t}} T_2'
      }
      \\ \\
      \begin{array}{cc}
        \infer{X \cong^{\text{t}} X}{}
        &
        \infer{
         \abs{\forall}{X}{\kappa_1}{T_1} \cong^{\text{t}} \abs{\forall}{X}{\kappa_2}{T_2}
        }{
         \kappa_1 \cong \kappa_2
         \quad  T_1 \cong T_2
        }
        \\ \\
          \infer{
           \abs{\forall}{x}{T_1}{T_1'} \cong^{\text{t}} \abs{\forall}{x}{T_2}{T_2'}
          }{
           T_1 \cong T_2
           \quad  T_1' \cong T_2'
          }
        & 
        \infer{
         \abs{\Pi}{x}{T_1}{T_1'} \cong^{\text{t}} \abs{\Pi}{x}{T_2}{T_2'}
        }{
         T_1 \cong T_2
         \quad  T_1' \cong T_2'
        }
        \\ \\ 
          \infer{
           \abs{\lambda}{x}{T_1}{T_1'} \cong^{\text{t}} \abs{\lambda}{x}{T_2}{T_2'}
          }{
           T_1 \cong T_2
           \quad  T_1' \cong T_2'
          }
        & 
        \infer{
         \abs{\lambda}{X}{\kappa_1}{T_1} \cong^{\text{t}} \abs{\lambda}{X}{\kappa_2}{T_2}
        }{
         \kappa_1 \cong \kappa_2
         \quad  T_1 \cong T_2
        }
        \\ \\
          \infer{
          \abs{\iota}{x}{T_1}{T_1'} \cong^{\text{t}} \abs{\iota}{x}{T_2}{T_2'}
          }{
          T_1 \cong T_2
          \quad  T_1' \cong T_2'
          }
        & 
        \infer{
         T_1\ t_1 \cong^{\text{t}} T_2\ t_2
        }{
         T_1 \cong^{\text{t}} T_2 \quad |t_1| =_{\beta\eta} |t_2|
        }
        \\ \\
          \infer{
           T_1 \cdot T_1' \cong^{\text{t}} T_2 \cdot T_2'
          }{
           T_1 \cong^{\text{t}} T_2
           \quad T_1' \cong T_2'
          }
        &
        \infer{
         \{ t_1 \simeq t_2 \} \cong^{\text{t}} \{ t_1'\ \simeq t_2' \}
        }{
         |t_1| =_{\beta\eta} |t_1'| \quad |t_2| =_{\beta\eta} |t_2'|
        }
      \end{array}
      \\ \\ \fbox{\(\kappa_1 \cong \kappa_2\)}
      \\ \\
      \begin{array}{cc}
        \infer{
         \star \cong \star
        }{}
        & \infer{
           \abs{\Pi}{x}{T_1}{\kappa_1} \cong \abs{\Pi}{x}{T_2}{\kappa_2}
          }{
           T_1 \cong T_2
           \quad \kappa_1 \cong \kappa_2
          }
        \\ \\
        \infer{
         \abs{\Pi}{X}{\kappa_1}{\kappa_1'} \cong \abs{\Pi}{X}{\kappa_2}{\kappa_2'}
        }{
         \kappa_1 \cong \kappa_2
         \quad \kappa_1' \cong \kappa_2'
        }
      \end{array}
    \end{array}
  \]
  \caption{Conversion rules for classifiers}
  \label{fig:conv}
\end{figure}  

\subsection{Overview of the constructs}
\label{sec:overview}

CDLE has as a subsystem the extrinsic Calculus of
Constructions (CC).  We have dependent function types
$\abs{\Pi}{x}{T}{T'}$ and kinds $\abs{\Pi}{x}{T}{\kappa}$, as well as
term- and type-level quantification over (possibly higher-kinded)
types $\abs{\forall}{X}{\kappa}{T}$ and
$\abs{\Pi}{X}{\kappa}{\kappa'}$.  We use $\forall$ when the
corresponding argument will be erased, and $\Pi$ when it will be
retained.  Since we do not erase term or type arguments from
type-level applications, we thus write $\abs{\Pi}{X}{\kappa}{\kappa'}$
instead of $\abs{\forall}{X}{\kappa}{\kappa'}$.
For abstractions, we write $\lambda$ to
correspond to $\Pi$ and $\Lambda$ to correspond to $\forall$.  As noted
above, application to a type is denoted with center dot ($\cdot$).

To Curry-style CC, CDLE adds: implicit products, introduced originally
by Miquel~\cite{miquel01}; a primitive equality type $\{ t \simeq
t'\}$; and dependent intersection types $\abs{\iota}{x}{T}{T'}$,
introduced by Kopylov~\cite{kopylov03}.  Implicit products are used
for erased arguments to functions, found also in systems like Agda
(c.f.~\cite{mishraLinger08}).  Dependent intersections are a rather
exotic construct allowing us to assign type $\abs{\iota}{x}{T_1}{T_2}$ to
erased term $t$ when we can assign $T_1$ to $t$, and also assign
$[t/x]T$ to $t$.  For an annotated introduction form, we write
$[t_1,t_2$], where $t_1$ checks against type $T_1$, $t_2$ checks against
$[t_1/x]^{T_1}T_2$, and $t_1$ and $t_2$ have $\beta\eta$-equivalent
erasures.  Dependent intersections thus enable a controlled form of
self-reference in the type.  Previous work showed how to use this to
derive induction for Church-encoded natural numbers~\cite{stump18}.

The typing rules include conversion checks in a few places, e.g., as is
standard when switching from checking to synthesizing mode.
For the introduction forms for types, the checked type first is call-by-name
reduced to weak head normal form, which must be formed from the appropriate connective
for the term construct (e.g., a \(\Pi\)-type for a lambda abstraction).
Similarly, for the elimination forms the major premise has its type synthesized
and then call-by-name normalized to reveal the correct connective.
As is standard for a bidirectional type system, we also include the construct
$\chi\ T\ \mhyph\ t$ for type ascription (allowing a term whose type can be
checked to be given a user-provided type so that the whole 
expression synthesizes its type) and a judgmental (or \emph{subsumption}) rule
stating that terms whose types can be synthesized may be checked against a
convertible type.

We have modified the rules for equality types $\{ t_1 \simeq t_2 \}$ so that we require
nothing of $t_1$ and $t_2$ except that the set $\textit{dom}(\Gamma)$ of variables
declared by $\Gamma$ includes their free variables $\textit{FV}(t_1\ t_2)$.  Further modifications
over the version of CDLE in~\cite{stump18} are:
\begin{itemize}
\item To prove $\{ t_1 \simeq t_2\}$ for definitionally equal terms (that is,
  terms that are \(\beta\eta\)-equivalent modulo erasure), one now writes $\beta\{t'\}$,
  with the critical idea that $|\beta\{t'\}|$ erases to (the possibly unrelated) $|t'|$.
  We call this the \textbf{Kleene trick} because it goes back to Kleene's numeric
  realizability~\cite{Kle65_Classical-Extensions-of-Intuitionistic-Mathematics},
  which accepts any number $n$ as a realizer 
  of a true equation.  Here, we accept any term $t'$ as a realizer of $\{ t_1
  \simeq t_2\}$ when \(t_1\) and \(t_2\) are definitionally equal, provided the
  free variables of \(t'\) are declared in the context.
  
  The Kleene trick means that in Cedille, any such term --- even otherwise untypable
  terms, non-normalizing terms, etc. --- prove trivially true equations.
  Put another way, any trivially true equation type in CDLE is a suitable type
  to classify all untyped lambda calculus terms.
\item The \(\rho\) construct allows one to rewrite occurrences of \(t_1\) in the
  checked type \(T\) of the whole expression to \(t_2\) before checking
  the type of the subexpression \(t'\).
  The version presented here requires a type annotation (``guide'') \(@x\langle
  t_2 \rangle.T\).

  In \(\rho\ t\ @x\langle t_2 \rangle.T\ \mhyph\ t'\), the first subexpression
  \(t\) must synthesize (possibly after some normalization) an equation type of
  the form \(\{t_1 \simeq t_2'\}\), and the user provided term \(t_2\) must be
  \(\beta\eta\)-convertible (modulo erasure) to \(t_2'\).
  The user provided type \(T'\) is then checked to have kind \(\star\) after
  replacing occurrences of \(x\) with \(t_2\), and the subexpression \(t'\) is
  checked against this type.
  Since equality is untyped, it may be that \([t_2'/x]T\) is not a well-kinded
  type, so in this explicit form we require a definitionally equal term \(t_2\)
  from the user, which may involve more typing annotations. 
  
  Finally, \([t_1/x]T'\) (which need not be well-kinded) must be convertible
  with the expected type \(T\).
  In Cedille, the guide is optional and the construct may be used to rewrite a
  contextually given type; a heuristic, whose details are beyond the scope of
  this document, is used to produce a resulting type that is well-kinded.
  The current implementation of Cedille (as of April 13, 2021), does not support
  specifying \(t_2\) in the guide \(@x\langle t_2 \rangle.T\).
  This additional specification of the term \(t_2\) was required to prove
  Theorem~\ref{thm:syntactic-kind-pres}, which did not appear in earlier
  versions of this document (and its absence in prior versions does not affect
  the semantic proofs).

\item We adopt a strong form of Nuprl's \textbf{direct computation rules}~\cite{constable+86}:
  If we have a term $t'$ of type $T$ and a proof $t$ that $\{ t' \simeq t''\}$, then we may conclude that
  $t''$ has type $T$ by writing the annotated term $\varphi\ t\ \mhyph\ t'\ \{t''\}$, which
  erases to $t''$.
\item Where the previous version of CDLE uses $\beta$-equivalence for (erased)
  terms, we here adopt $\beta\eta$-equivalence.
  This allows us to observe in many cases that retyping functions are actually
  $\beta\eta$-equivalent to $\absu{\lambda}{x}{x}$. 
  While $\beta\eta$-equivalence takes more work to incorporate into intrinsic
  type theory~\cite{geuvers92}, it raises no difficulties for our extrinsic one.
\item We add an explicit axiom $\delta$ saying that
  Church-encoded boolean \emph{true} is different from \emph{false}.
  In the implementation of Cedille, this is generalized to the rule where the
  proof \(t\) synthesizes an equation \(\{t_1 \simeq t_2\}\) in which \(|t_1|\)
  and \(|t_2|\) are separable using the \emph{B\"ohm-out algorithm}~\cite{BDPR79_Bohm-Algorithm}.

  In the first version of CDLE, such an axiom was derivable from \emph{lifting},
  a construct allowing terms with simple types to be lifted to the type
  level~\cite{stump17}.
  We omit lifting in this new version of CDLE, because while
  sound, lifting as defined in that previous work is complicated and appears to be incomplete.  Developing a new
  form of lifting remains to future work.
\end{itemize}

The equality type remains \textbf{intensional}: we equate closed terms iff they
are $\beta\eta$-equal.

\subsection{Semantics and metatheory}

Figure~\ref{fig:semtp} gives a realizability semantics for types and
kinds, following the semantics given in the previous papers on
CDLE~\cite{stump18,stump17}.  Details of this semantics are presented
further in Section~\ref{sec:snd} below.  Using the semantics and the
definition in Figure~\ref{fig:semctxt} of $\interp{\Gamma}$, we can
prove the following theorem:
\begin{theorem}[Soundness]
\label{thm:snd}
Suppose $(\sigma,\rho)\in\interp{\Gamma}$.  Then we have:
\begin{enumerate}
\item If $\Gamma\vdash \kappa$, then $\interp{\kappa}_{\sigma,\rho}$ is defined.
\item If $\Gamma\vdash T \tpsynth \kappa$, then $\interp{T}_{\sigma,\rho}\in\interp{\kappa}_{\sigma,\rho}$.
\item If $\Gamma\vdash t \tpsynth T$ then $[\sigma |t|]_{\cbe}\in\interp{T}_{\sigma,\rho}\in \mathcal{R}$.
\item If $\Gamma\vdash t \tpcheck T$ and $\interp{T}_{\sigma,\rho}\in \mathcal{R}$, then
    $[\sigma |t|]_{\cbe}\in\interp{T}_{\sigma,\rho}$.
\item If $T \cong T'$ or $T \cong^{\text{t}} T'$ and $\interp{T}_{\sigma,\rho}$ and $\interp{T'}_{\sigma,\rho}$ are both defined, then they are equal.
\item If \(\kappa \cong \kappa'\) and \(\interp{\kappa}_{\sigma,\rho}\) and
  \(\interp{\kappa'}_{\sigma,\rho}\) are both defined, then they are equal.
\end{enumerate}
\end{theorem}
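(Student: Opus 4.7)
The plan is to prove all five parts simultaneously by induction on the derivations for parts~1--4, together with an induction on the structure of the conversion derivation for part~5; the parts are interlocked because the typing rules contain conversion, $\beta$-reduction, and cross-mode premises. Before the case analysis, I would prove the standard semantic toolkit: a substitution lemma giving $\interp{[t/x]T}_{\sigma,\rho}=\interp{T}_{\sigma[x\mapsto[\sigma|t|]_{\cbe}],\rho}$ and its type-variable analogue, a $\beta$-invariance lemma so that the two bidirectional $\beta$-reduction rules near the top of Figure~\ref{fig:tp} are discharged immediately, and a weakening/extension lemma for $\interp{\Gamma}$ used wherever the context grows in a premise.

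With these in hand, the Calculus-of-Constructions fragment ($\Pi$, $\forall$, abstractions and applications at term and type level) goes through following the familiar semantic pattern for pure type systems, invoking the substitution lemma to rewrite the classifier at each application. The bidirectional rules that only shift between $\tpsynth$ and $\tpcheck$ via a conversion or $\beta$-step reduce to part~5 and $\beta$-invariance; the $\chi$ rule is identical. Part~5 itself proceeds by inspecting Figure~\ref{fig:conv} and reducing everything to $=_{\beta\eta}$ on erased terms, which the semantics respects by design. The CDLE-specific cases require more care. For $[t,t']\tpcheck\abs{\iota}{x}{T}{T'}$, the premise $|t|=_{\beta\eta}|t'|$ identifies $[\sigma|t|]_{\cbe}$ with $[\sigma|t'|]_{\cbe}$, so that the \emph{same} equivalence class inhabits both $\interp{T}_{\sigma,\rho}$ and $\interp{T'}_{\sigma[x\mapsto[\sigma|t|]_{\cbe}],\rho}$ (after invoking the substitution lemma), and therefore lies in the semantic intersection; the projections dually unwind this invariant. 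For $\beta\{t'\}$ the Kleene trick is reflected in $\interp{\{t\simeq t\}}_{\sigma,\rho}$ being the full set of realizers. The $\delta$ rule is justified by noting that $\absu{\lambda}{x}{\absu{\lambda}{y}{x}}$ and $\absu{\lambda}{x}{\absu{\lambda}{y}{y}}$ lie in distinct $\beta\eta$-classes, so their equality type has empty interpretation, making the premise vacuous. The $\rho$ and $\phi$ rules both rely on the fact that rewriting $t_1$ to $t_2$ (respectively $t'$ to $t''$) inside $T$ preserves interpretations, since these terms are provably $\beta\eta$-equal and the semantics observes only $\beta\eta$-classes.

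The main obstacle I anticipate is the dependent intersection introduction case, because it is the one place where the erasure-equality side-condition, the substitution lemma, and the semantic reading of $\iota$ must all be combined coherently. The auxiliary hypothesis $\interp{T}_{\sigma,\rho}\in\mathcal{R}$ appearing in part~4 is, I believe, tailored specifically to keep this case on rails: it ensures the relevant interpretations are of the kind the semantics of $\iota$ demands, and it accounts for why checking must carry a precondition that synthesis enjoys automatically. Secondary care is needed in the $\delta$ and $\rho/\phi$ cases, where one must verify that the semantic invariants (empty interpretation of false equalities, rewriting-stability under $=_{\beta\eta}$) genuinely follow from the chosen realizability model rather than merely being plausible.
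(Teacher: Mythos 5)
Your proposal follows essentially the same route as the paper's proof: a mutual induction over the typing, kinding, superkinding, and conversion derivations, supported by the same toolkit (term- and type-substitution lemmas, invariance of interpretations under $\beta$-reduction, extension of $(\sigma,\rho)\in\interp{\Gamma}$ when the context grows), and with the same key observations in the CDLE-specific cases ($\iota$-introduction via the erasure-equality side condition plus substitution, the Kleene trick for $\beta\{t'\}$, emptiness of the interpretation of the false equation for $\delta$, and $\beta\eta$-invariance of the semantics for $\rho$, $\phi$, and $\chi$). The only differences are cosmetic (e.g., stating the substitution lemma with the class $[\sigma|t|]_{\cbe}$ where the paper uses a representative via the choice function, and attributing part~4's $\interp{T}_{\sigma,\rho}\in\mathcal{R}$ precondition specifically to the $\iota$ case when it is in fact used throughout the checking-mode cases), so the plan is sound and matches the paper.
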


An easy corollary, by the semantics of $\forall$-types, is then:

\begin{theorem}[Logical consistency]
\label{thm:consis}
  There is no term $t$ such that $\cdot \vdash t \tpsynth \abs{\forall}{X}{\star}{X}$.
\end{theorem}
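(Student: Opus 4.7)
The plan is to derive a contradiction from the assumption that some closed term $t$ has type $\abs{\forall}{X}{\star}{X}$, by applying the Soundness theorem and then exploiting the intersection semantics of $\forall$. First I would observe that under the empty context $(\emptyset,\emptyset)\in\interp{\emptyset}$ trivially, so the premises of Theorem~\ref{thm:snd} are available. The kinding rules give $\vdash \abs{\forall}{X}{\star}{X}\tpsynth \star$, so part~(2) of soundness supplies that $\interp{\abs{\forall}{X}{\star}{X}}_{\emptyset,\emptyset}$ is defined and lies in $\interp{\star}$, which in the realizability semantics is (essentially) the set $\mathcal{R}$ of reducibility candidates. Hence whichever direction the assumed judgment $\vdash t : \abs{\forall}{X}{\star}{X}$ uses, part~(3) or part~(4) of Theorem~\ref{thm:snd} applies and yields $[|t|]_{\cbe}\in\interp{\abs{\forall}{X}{\star}{X}}_{\emptyset,\emptyset}$.

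Next I would invoke the defining clause of the realizability semantics for $\forall$: since $X:\star$ ranges over all candidates, we have $\interp{\abs{\forall}{X}{\star}{X}}_{\emptyset,\emptyset} = \bigcap_{A\in\mathcal{R}} A$. Consequently $[|t|]_{\cbe}$ would have to lie in every reducibility candidate simultaneously. The contradiction then follows by exhibiting any candidate $A_{0}\in\mathcal{R}$ that does not contain $[|t|]_{\cbe}$ --- for instance a candidate whose members are required to reduce to some specific head form (such as a head variable) that the closed term $|t|$ cannot attain, or, if the definition of $\mathcal{R}$ in Section~\ref{sec:snd} permits, an empty or otherwise minimal candidate.

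The main obstacle is this last step: the construction of $A_{0}$ depends on the concrete definition of $\mathcal{R}$ given in Section~\ref{sec:snd}, which the excerpt does not spell out. Given the standard shape of such realizability semantics, and the author's own labelling of this result as an \emph{easy corollary} of the $\forall$-clause, I expect the construction to be routine: typically one takes $A_{0}$ to be the set of $\cbe$-classes of terms of some restricted neutral shape, and then checks that it satisfies the usual closure conditions on candidates. The only point that deserves care is verifying that $A_{0}$ is really a candidate in the sense required to be a legitimate element of $\interp{\star}$; once that is done, the observation that no closed erased term can match the chosen shape closes the argument.
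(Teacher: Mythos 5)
Your proposal is correct and follows exactly the route the paper intends: the result is stated there as an easy corollary of Theorem~\ref{thm:snd} via the semantics of $\forall$-types, which is precisely your argument (soundness gives $[|t|]_{\cbe}\in\interp{\abs{\forall}{X}{\star}{X}}_{\emptyset,\emptyset}=\bigcap_{A\in\mathcal{R}}A$). The one step you leave open resolves immediately from the paper's definitions: $\mathcal{R}:=\{[S]_{\cbe}\mid S\subseteq\mathcal{L}\}$ imposes no neutrality or closure conditions beyond $\beta\eta$-closure, so taking $S=\emptyset$ shows $\emptyset\in\mathcal{R}$ (the paper even records that $\emptyset$ is the least element of the lattice $\mathcal{R}$), hence $\bigcap_{A\in\mathcal{R}}A=\emptyset$ and no $[|t|]_{\cbe}$ can lie in it; your alternative candidate of terms with a designated head variable is unnecessary here, and would in any case be vacuous since all candidates consist of closed terms.
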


It may worry some readers that we have:
\begin{observation}
  There are typable terms $t$ which fail to normalize.
\end{observation}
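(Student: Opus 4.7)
The plan is to exploit the Kleene trick directly: produce a typable term whose erasure is a standard non-normalizing lambda term. The obvious candidate is Curry's $\Omega := (\absu{\lambda}{x}{x\ x})\ (\absu{\lambda}{x}{x\ x})$, which $\beta$-reduces only to itself.

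First I would observe that the rule for $\beta\{-\}$ in Figure~\ref{fig:tp} places essentially no demands on the bracketed witness: $\Gamma \vdash \beta\{t'\} \tpcheck \{t \simeq t\}$ requires only that $\textit{FV}(t) \subseteq \textit{dom}(\Gamma)$, with no typing constraint on $t'$ itself. Pick any closed term $t_0$ at all (e.g.\ $t_0 = \absu{\lambda}{x}{x}$), and consider the annotated term $\beta\{\Omega\}$, where $\Omega$ is as above. Then directly from the rule, $\vdash \beta\{\Omega\} \tpcheck \{t_0 \simeq t_0\}$, so this term is typable in the empty context.

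Next I would compute the erasure using Figure~\ref{fig:eraser}: $|\beta\{\Omega\}| = |\Omega|$, and since $\Omega$ contains no annotations, $|\Omega| = \Omega$. Because every $\beta$-reduction sequence starting from $\Omega$ simply produces $\Omega$ again, the erased term has no $\beta$-normal form, and indeed no $\cbe$-normal form. Therefore $\beta\{\Omega\}$ is a typable annotated term whose erasure fails to normalize, establishing the observation.

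There is essentially no obstacle here; the only subtlety is deciding what ``fails to normalize'' should mean for an annotated term. Since the semantic development (and in particular Theorem~\ref{thm:snd}) works with the erasure $|t|$ under $\cbe$-equivalence, the intended reading is non-normalization of $|t|$, which the construction above witnesses. If instead one wishes to exhibit a non-normalizing sequence at the level of annotated terms, the same $\beta\{\Omega\}$ works, since $\beta$-reducing the $\Omega$ subterm yields $\beta\{\Omega\}$ unchanged.
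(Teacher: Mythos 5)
Your construction is exactly the paper's own argument: the paper defines \texttt{Top} as $\{\absu{\lambda}{x}{x} \simeq \absu{\lambda}{x}{x}\}$ and notes that via the Kleene trick $\beta\{t\}$ checks against it for any closed $t$, including non-normalizing ones, which is precisely your $\beta\{\Omega\}$ example with the erasure computation made explicit. The paper only adds a secondary remark that one can also type non-normalizing terms using $\delta$ and $\phi$ under an assumption $\{\textit{tt}\simeq\textit{ff}\}$, but your route is its primary justification, so the proposal is correct and essentially identical.
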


Defining \verb|Top| to be $\{\absu{\lambda}{x}{x} \simeq
\absu{\lambda}{x}{x}\}$, we may assign \verb|Top| to any closed term \verb|t|,
including non-normalizing ones. In our annotated syntax, we write
\texttt{\(\beta\)\{t\}}.
Even without this, the presence of \(\varphi\) allows us to type non-normalizing
terms assuming an erased argument \(x\) of type \(\{\absu{\lambda}{x}{x} \simeq
\absu{\lambda}{x}{x\ x}\}\) by changing the type of the term \(\texttt{id}\
\cdot \texttt{True}\ \texttt{id}\), where \texttt{True} is
\(\abs{\forall}{X}{\star}{X \to X}\).
This would allow us to give the type \texttt{True} to \(\Omega = (\absu{\lambda}{x}{x\ x})\
\absu{\lambda}{x}{x\ x}\).
In general, we can use any inconsistent assumption to do this, and in the
presence of \(\delta\) that includes all equations between two terms that are
B\"ohm-separable.
But, failure of normalization does not
impinge on Theorem~\ref{thm:consis}. Extensional Martin-L\"of type theory (MLTT)
is also non-normalizing, for a very similar reason, but this fact does not contradict
its logical soundness~\cite{dybjer16}. In CDLE, the guarantees one gets about
the behavior of terms are expressed almost entirely in their types. If the types
are weak, then not much is guaranteed; but stronger types can guarantee
properties like normalization, as demonstrated by the following theorem:

\begin{theorem}[Call-by-name normalization of functions]
  \label{thm:cedille-termination}
  Suppose \(\cdot \vdash t \tpsynth T\) and \(\cdot \vdash t' \tpsynth T \to
  \abs{\Pi}{x}{T_1}{T_2}\), and furthermore that \(|t'| =
  \absu{\lambda}{x}{x}\).
  Then \(|t|\) is call-by-name normalizing.
\end{theorem}

Given the lack of normalization in general, several checks in the typing rules --
for things like $|t| =_{\beta\eta} |t'|$ -- are formally undecidable.
We simply impose a bound on the number of steps of reduction,
and thus restore formal decidability (we are checking ``typable within
a given budget'').  In practice, the same is done for Coq and Agda,
where type checking is decidable but, in general, infeasible (since one
may write astronomically slow terminating functions).

Finally, in line with ideas recently advocated by Dreyer, we
are less concerned with syntactic
type preservation as we are with \emph{semantic} type
preservation~\cite{dreyer18}.
Note that by construction, semantic types $\interp{T}_{\sigma,\rho}$ are
preserved by $\beta\eta$-reduction:

\begin{theorem}[Semantic type preservation]
  If $t \leadsto_{\beta\eta} t'$ and $t\in\interp{T}_{\sigma,\rho}$, then $t'\in\interp{T}_{\sigma,\rho}$.
\end{theorem}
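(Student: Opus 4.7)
The plan is to reduce the theorem to a basic closure property of the realizability semantics. The notation $[\sigma|t|]_{\cbe}$ appearing throughout the soundness theorem strongly suggests that the semantic interpretations $\interp{T}_{\sigma,\rho}$ are built out of $c\beta\eta$-equivalence classes of (closed) terms, or equivalently, are sets of terms that are closed under $\beta\eta$-conversion. So my first move is to isolate this as a lemma: every $R\in\mathcal{R}$ is closed under $\beta\eta$-equivalence, and moreover every interpretation $\interp{T}_{\sigma,\rho}$ lies in $\mathcal{R}$. Once that is established, the theorem itself is immediate: $t \leadsto_{\beta\eta} t'$ implies $t =_{\beta\eta} t'$, hence $t'$ is in the same equivalence class as $t$, hence $t' \in \interp{T}_{\sigma,\rho}$.

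To establish the closure lemma I would proceed by induction on the structure of $T$, using the clauses in Figure~\ref{fig:semtp} referenced in the paper. The base case is that $\mathcal{R}$ itself was defined (in the companion semantics of~\cite{stump18,stump17}) to consist of sets of $\cbe$-classes. For the inductive cases, $\forall$-types and $\Pi$-types interpret as function spaces whose membership condition is quantified over inputs and requires the result to lie in an interpretation; since that result condition is closed by the inductive hypothesis, and since $\beta\eta$-reduction of the outer term commutes with application up to $\beta\eta$-equivalence of the result, the outer set is also closed. Dependent intersection $\abs{\iota}{x}{T}{T'}$ reduces to a conjunction of membership conditions, each closed by induction. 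The equality type $\{t\simeq t'\}$ is the subtlest case, but its membership condition depends only on $\beta\eta$-equivalence of $t$ and $t'$, not on the realizer, so closure is trivial there as well.

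For convenience I would package the induction by first proving, as a separate lemma, that the operators used to define $\interp{T}_{\sigma,\rho}$ (function space, intersection, equality predicate) all send $\cbe$-closed sets to $\cbe$-closed sets, and then invoke this lemma at each clause in Figure~\ref{fig:semtp}. This separates the generic closure reasoning from the type-former-specific definitions.

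I expect the only genuine difficulty to be locating, in the underlying semantic definitions (which are deferred to Section~\ref{sec:snd} and to Figure~\ref{fig:semtp}), the precise formulation of $\mathcal{R}$ used here. With $\beta\eta$-equivalence now adopted in place of the $\beta$-equivalence of~\cite{stump18}, one must verify that $\eta$-closure is built into $\mathcal{R}$ from the outset; otherwise the $\forall$ and $\Pi$ cases would require an extra argument to absorb $\eta$-expansions of function values. Assuming the semantics is formulated uniformly with $c\beta\eta$-classes, as the notation suggests, no additional work is needed and the theorem follows by a short structural induction.
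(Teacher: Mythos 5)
Your proposal is correct and takes essentially the same route as the paper: the paper offers no separate argument, asserting the theorem ``by construction'' because every set-valued interpretation in Figure~\ref{fig:semtp} is assembled from $\cbe$-equivalence classes (elements of $\mathcal{R}$), so membership depends only on the $\beta\eta$-class of the term and $t \leadsto_{\beta\eta} t'$ gives $t =_{\beta\eta} t'$. Your closure lemma and structural induction over the semantic clauses simply spell out explicitly what the paper leaves as that one-line observation.
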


  Confluence of $\beta\eta$-reduction for (erased)
  terms is nothing other than confluence of untyped lambda calculus.
  This is because, as easily verified by inspecting
  Figure~\ref{fig:eraser}, the erasure function maps annotated terms
  $t$ to terms $|t|$ of pure untyped lambda calculus.

We make a concession to syntactic classifier preservation in the case of types
and kinds.
During type inference, types may be reduced using a call-by-name operational
semantics to reveal type constructors.
With the removal of lifting from CDLE, terms cannot compute types and so no
terms need to be reduced during this process.

\begin{theorem}[Syntactic kind preservation]
  \label{thm:syntactic-kind-pres}
  If \(\Gamma \vdash T \tpsynth \kappa\) and \(T \leadstoc T'\) then
  \(\Gamma \vdash T' \tpsynth \kappa'\) for some \(\kappa'\) such that \(\kappa
  \cong \kappa'\).
\end{theorem}

From this theorem and a few other lemmas (see
Appendix~\ref{sec:proof-syntactic-kind-pres}), we can show the \emph{validity}
(or \emph{agreement}) of the judgments comprising CDLE.

\begin{theorem}[Judgment validity]
  \label{thm:judge-valid}
  If \(\vdash \Gamma\) then:
  \begin{enumerate}
  \item if \(\Gamma \vdash T \tpsynth \kappa\) then \(\Gamma \vdash \kappa\)
    
  \item if \(\Gamma \vdash t \tpsynth T\) then \(\Gamma \vdash T \tpsynth \star\)
  \end{enumerate}
\end{theorem}

Type checking \((\Gamma \vdash t \tpcheck T)\) is not covered in
Theorem~\ref{thm:judge-valid}, since the convention for a bidirectional system
is that there \(T\) is already \emph{assumed} to have type \(\star\) under a
typing context.

\begin{figure}
\[
\begin{array}{lll}
\interp{X}_{\sigma,\rho} & = & \rho(X) \\ 
\interp{\Pi x : T_1. T_2}_{\sigma,\rho} & = & 
    [\{ \lambda x.t\ |\ \forall E\in\interp{T_1}_{\sigma,\rho}.\\
\ &\ &\ \ \ \  [[\choice(E)/x]t]_{\cbe}\in\interp{T_2}_{\sigma[x\mapsto \choice(E)],\rho} \ \wedge\ t = |t|\}]_{\cbe}
 \\
\interp{\forall X:\kappa.T}_{\sigma,\rho} & = & 
  \elcap \{ \interp{T}_{\sigma,\rho[X\mapsto S]} |\ S\in\interp{\kappa}_{\sigma,\rho} \}  \\ 
\interp{\forall x:T.T'}_{\sigma,\rho} & = & 
  \elcap_\star \{ \interp{T'}_{\sigma[x\mapsto \choice(E)],\rho}\ |\ E\in\interp{T}_{\sigma,\rho} \} \\ 
\interp{\iota x:T.T'}_{\sigma,\rho} & = & \{ E\in\interp{T}_{\sigma,\rho} |\ E \in \interp{T'}_{\sigma[x\mapsto \choice(E)],\rho} \} \\ 
\interp{\lambda X:\kappa.T}_{\sigma,\rho} & = & (S\in\interp{\kappa}_{\sigma,\rho} \mapsto \interp{T}_{\sigma,\rho[X\mapsto S]}) 
\\ 
\interp{\lambda x:T.T'}_{\sigma,\rho} & = & 
    (E\in\interp{T}_{\sigma,\rho} \mapsto \interp{T'}_{\sigma[x\mapsto \choice(E)],\rho}) 
\\ 
\interp{T\ T'}_{\sigma,\rho} & = & \interp{T}_{\sigma,\rho}(\interp{T'}_{\sigma,\rho})
\\ 
\interp{T\ t}_{\sigma,\rho} & = & \interp{T}_{\sigma,\rho}([\sigma |t|]_{\cbe})
\\
\interp{\{t \simeq t'\}}_{\sigma,\rho} & = & [\{ t''\ |\ \sigma |t| =_{\beta\eta} \sigma |t'|\ \wedge\ t'' = |t''| \}]_{\cbe} \\
\ &\ &\ \ \ \textnormal{ if }\textit{FV}(t\ t')\subseteq\textit{dom}(\sigma) 
\\
\interp{\star}_{\sigma,\rho} & = & \mathcal{R} \\ 
\interp{\Pi x:T.\kappa}_{\sigma,\rho} & = & 
(E\in\interp{T}_{\sigma,\rho} \to \interp{\kappa}_{\sigma[x\mapsto \choice(E)],\rho}),\\
\ &\ &\ \ \ \textnormal{ if }\interp{T}_{\sigma,\rho}\in\mathcal{R} \\
\interp{\Pi x:\kappa.\kappa'}_{\sigma,\rho} & = & (S\in\interp{\kappa}_{\sigma,\rho} \to \interp{\kappa}_{\sigma,\rho[X\mapsto S]}) \\
\elcap_\star X & = & \left\{\begin{array}{l}
                                         \, \negthinspace\elcap X, \textnormal{ if } X\neq\emptyset\\
                                         \, \negthinspace[\mathcal{L}]_{\cbe},\textnormal{ otherwise}
                                       \end{array}\right. \\
\end{array}
\]
\caption{Semantics for types and kinds}
\label{fig:semtp}
\end{figure}

\begin{figure}
\[
\begin{array}{lll}
(\sigma\uplus[x\mapsto t],\rho)\in\interp{\Gamma,x:T} & \Leftrightarrow & (\sigma,\rho)\in\interp{\Gamma} \ \wedge\ 
 [t]_{\cbe}\in\interp{T}_{\sigma,\rho}\in\mathcal{R}\ \wedge\ t = |t| \\
(\sigma,\rho\uplus[X\mapsto S])\in\interp{\Gamma,X:\kappa} & \Leftrightarrow & (\sigma,\rho)\in\interp{\Gamma} \ \wedge\ 
S\in\interp{\kappa}_{\sigma,\rho} \\
(\emptyset,\emptyset)\in\interp{\cdot}
\end{array}
\]
\caption{Semantics of typing contexts $\Gamma$}
\label{fig:semctxt}
\end{figure}

\subsection{Some details about the semantics and the proof of Theorem~\ref{thm:snd}}
\label{sec:snd}

Following the development in~\cite{stump17}, we work with
set-theoretic partial functions for the semantics of higher-kinded
types.  Types are interpreted as $\beta\eta$-closed sets of closed
terms. Let $\mathcal{L}$ be the set of closed terms of pure lambda calculus
(differently from~\cite{stump17}, we include all terms at this point,
even non-normalizing ones).  We
write $=_{\cbe}$ for standard $\beta\eta$-equivalence of pure lambda calculus, restricted to
closed terms; and $[t]_{\cbe}$ for $\{ t'\ |\ t =_{\cbe} t'\}$.  This
is extended to sets $S$ of terms by writing $[S]_{\cbe}$ for
$\{[t]_{\cbe}\ |\ t\in S\}$.  
If (in our meta-language) we affirm a statement
involving application of a partial function, then it is to be
understood that that application is defined.

\begin{definition}[Reducibility candidates]
  $\mathcal{R} := \{ [S]_{\cbe}\ |\ S\subseteq \mathcal{L} \}$.
\end{definition}

Throughout the development we find it convenient to use a
\textbf{choice function} $\choice$.  Given any nonempty set $E$ of
terms, $\choice$ returns some element of $E$.  Note that if $a \in A
\in \mathcal{R}$, then $a$ is a nonempty set of terms of pure lambda
calculus; it can also happen that $A \in\mathcal{R}$ is empty.  The
proof of Theorem~\ref{thm:snd} (see appendix) is then a straightforward adaptation of~\cite{stump17}. 

\textbf{Acknowledgments.}  This work was partially supported by the US
NSF support under award 1524519, and US DoD support under award
FA9550-16-1-0082 (MURI program).

\bibliography{biblio}

\appendix

\section{Proof of Theorem~\ref{thm:snd}}

First a few lemmas (easy proofs omitted):

\begin{lemma}
  $\interp{\kappa}_{\sigma,\rho}$ is nonempty if defined.
\end{lemma}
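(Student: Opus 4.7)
The plan is to proceed by structural induction on $\kappa$, appealing to the three forms allowed by Figure~\ref{fig:superknd}: $\star$, $\absu{\Pi\, x}{T}{\kappa'}$, and $\absu{\Pi\, X}{\kappa_1}{\kappa_2}$. In each case I will unpack the corresponding clause of Figure~\ref{fig:semtp} to see what ``defined'' means, then exhibit an element.

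For $\kappa = \star$, the clause gives $\interp{\star}_{\sigma,\rho} = \mathcal{R}$, which is manifestly nonempty: e.g.\ $[\mathcal{L}]_{\cbe} \in \mathcal{R}$, and even the empty candidate $[\emptyset]_{\cbe} = \emptyset$ lives in $\mathcal{R}$. So this case is immediate.

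For $\kappa = \absu{\Pi\, x}{T}{\kappa'}$, the hypothesis that $\interp{\kappa}_{\sigma,\rho}$ is defined forces $\interp{T}_{\sigma,\rho} \in \mathcal{R}$ and, for every $E \in \interp{T}_{\sigma,\rho}$, definedness of $\interp{\kappa'}_{\sigma[x\mapsto\choice(E)],\rho}$. By the induction hypothesis, each of these pointwise interpretations is nonempty. Using a meta-theoretic choice principle, I pick an element $f(E)$ in each, giving a function $f$ in the set $\bigl(E\in\interp{T}_{\sigma,\rho}\to\interp{\kappa'}_{\sigma[x\mapsto\choice(E)],\rho}\bigr)$; this $f$ witnesses nonemptiness. (If $\interp{T}_{\sigma,\rho}$ happens to be empty, the empty function already suffices, so the argument does not even need to cite choice.) The case $\kappa = \absu{\Pi\, X}{\kappa_1}{\kappa_2}$ is exactly analogous, with $\interp{\kappa_1}_{\sigma,\rho}$ in place of $\interp{T}_{\sigma,\rho}$ and indexed $\rho$-extensions in place of $\sigma$-extensions; the IH applied to $\kappa_2$ gives nonempty fibers, and choice supplies a total function.

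There is no real obstacle here: the only point to flag is a bookkeeping one, namely that the paper's $\choice$ operator is specified only for nonempty sets of \emph{terms} and so cannot be used directly to select inhabitants of $\interp{\kappa'}_{\sigma[x\mapsto\choice(E)],\rho}$; the construction of the witnessing function is therefore a standard application of the meta-level axiom of choice rather than of $\choice$ itself. Everything else is purely a matter of unfolding definitions.
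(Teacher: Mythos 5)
Your proof is correct: the paper omits this lemma's proof as ``easy,'' and your structural induction on $\kappa$ --- $\mathcal{R}$ is nonempty for $\star$, and dependent function spaces with nonempty (by IH) fibers are inhabited, via meta-level choice or the empty function when the domain is empty --- is exactly the intended argument, including your accurate observation that the paper's $\choice$ operator (defined only on nonempty sets of terms) is not what supplies the witness here. (One could even avoid the appeal to choice by constructing a canonical inhabitant by recursion on the kind, e.g.\ $\emptyset$ at $\star$ and the function returning the canonical inhabitant of each fiber at $\Pi$-kinds, but this is a refinement, not a correction.)
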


\begin{lemma}
\label{lem:choice}
If $E$ is nonempty, then $[\choice(E)]_{\cbe} = E$
\end{lemma}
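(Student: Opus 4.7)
The plan is to unpack definitions, with the main interpretive step being to recognize that in the contexts where $\choice$ gets used, the set $E$ is always a $\cbe$-equivalence class of closed terms (i.e., $E \in A$ for some $A \in \mathcal{R}$), so the conclusion $[\choice(E)]_{\cbe} = E$ makes sense. Indeed, by the definition $\mathcal{R} = \{[S]_{\cbe} \mid S \subseteq \mathcal{L}\}$, an element of a reducibility candidate is precisely an object of the form $[t]_{\cbe}$, and such a set is nonempty since it contains $t$. I would state this reading explicitly at the start of the proof to align with how the lemma is actually invoked.

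Given that, the argument is a short chase. First, since $E$ is nonempty, pick any representative $t_0 \in E$; because $E$ is a $\cbe$-equivalence class, $E = [t_0]_{\cbe}$. Next, appeal to the defining property of the choice function: $\choice(E) \in E$. Combining these gives $\choice(E) \in [t_0]_{\cbe}$, i.e., $\choice(E) =_{\cbe} t_0$.

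Finally, since $=_{\cbe}$ is an equivalence relation, equivalence classes of $\cbe$-equivalent terms coincide, so $[\choice(E)]_{\cbe} = [t_0]_{\cbe} = E$, which is the claim.

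There isn't really a main obstacle in the mathematical content; the only potential snag is the implicit assumption about what kind of object $E$ is. If one were to read the statement literally, with $E$ an arbitrary nonempty set of terms, the conclusion would fail (for instance, take $E = \{\lambda x.x\}$; its $\cbe$-closure strictly contains $E$). So the real work is simply to flag that this lemma is applied only to equivalence classes $E = [t]_{\cbe}$ arising from $a \in A \in \mathcal{R}$, after which the three-line argument above suffices.
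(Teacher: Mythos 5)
Your proof is correct, and it matches what the paper intends: the paper omits this proof as ``easy,'' and the natural argument is exactly your three-line chase ($E=[t_0]_{\cbe}$ for any $t_0\in E$, $\choice(E)\in E$, so $[\choice(E)]_{\cbe}=[t_0]_{\cbe}=E$). Your observation that the statement must be read with $E$ an element of a reducibility candidate (a $\cbe$-equivalence class of closed terms), since it fails for an arbitrary nonempty set such as $\{\absu{\lambda}{x}{x}\}$, is well taken and consistent with how the lemma is invoked throughout the soundness proof.
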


\begin{lemma}
  The set $\mathcal{R}$ ordered by subset forms a complete lattice,
  with greatest element $[\mathcal{L}]_{\cbe}$ and greatest lower bound
  of a nonempty set of elements given by
  intersection.  Also, $\emptyset$ is the least element.
\end{lemma}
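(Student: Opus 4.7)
The plan is to identify $\mathcal{R}$ with the full powerset $\mathcal{P}(\mathcal{L}/{=_{\cbe}})$ and then read the four claims off from the standard lattice structure of a powerset.

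First I would verify the identification. Each $[S]_{\cbe}\in\mathcal{R}$ is by definition the set $\{[t]_{\cbe}\mid t\in S\}$, i.e.\ a set of $\cbe$-equivalence classes of closed terms, so $[S]_{\cbe}\subseteq \mathcal{L}/{=_{\cbe}}$. Conversely, given any $A\subseteq \mathcal{L}/{=_{\cbe}}$, take $S := \bigcup_{c\in A} c \;\subseteq\; \mathcal{L}$; then $[S]_{\cbe}=A$, because every $c\in A$ is $[t]_{\cbe}$ for any $t\in c\subseteq S$, and every $[t]_{\cbe}\in [S]_{\cbe}$ has $t\in c$ for some $c\in A$ and hence equals that $c$. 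Thus $\mathcal{R}$ coincides as a set with $\mathcal{P}(\mathcal{L}/{=_{\cbe}})$, and the relation $\subseteq$ on $\mathcal{R}$ is ordinary subset inclusion on this powerset.

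The four claims are then immediate from the well-known fact that any powerset, ordered by inclusion, is a complete lattice whose meets are intersections and whose joins are unions. The greatest element is the full set $\mathcal{L}/{=_{\cbe}} = [\mathcal{L}]_{\cbe}$; the least element is $\emptyset$, which lies in $\mathcal{R}$ since $\emptyset=[\emptyset]_{\cbe}$; and for any nonempty family $\{A_i\}_{i\in I}\subseteq\mathcal{R}$, the intersection $\bigcap_{i\in I} A_i$ is again a subset of $\mathcal{L}/{=_{\cbe}}$, so belongs to $\mathcal{R}$, and is visibly the greatest lower bound. The nonemptyness caveat in the statement just avoids the convention by which $\bigcap\emptyset$ would be taken to be the ambient universe (which here happens to coincide with $[\mathcal{L}]_{\cbe}$ anyway).

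No step is a real obstacle; the whole content of the lemma is the observation that $\mathcal{R}$, despite being presented via the notation $[S]_{\cbe}$, is literally a full powerset. The only point worth being careful about is fixing the reading of $[S]_{\cbe}$ as ``a set of equivalence classes'' rather than ``the $\cbe$-closure of $S$ inside $\mathcal{L}$''; both yield isomorphic lattices, but the statement $[\sigma|t|]_{\cbe}\in\interp{T}_{\sigma,\rho}$ of Theorem~\ref{thm:snd} pins down the set-of-classes reading as the intended one.
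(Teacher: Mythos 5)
Your proof is correct: the observation that $\mathcal{R}$ is literally the full powerset of $\mathcal{L}/{=_{\cbe}}$ (via $S\mapsto[S]_{\cbe}$ and $A\mapsto\bigcup_{c\in A}c$) immediately yields the complete-lattice structure, the top element $[\mathcal{L}]_{\cbe}$, intersections as greatest lower bounds of nonempty families, and $\emptyset=[\emptyset]_{\cbe}$ as bottom. The paper omits this proof as easy, and your argument is exactly the standard one intended, including the correct set-of-equivalence-classes reading of $[S]_{\cbe}$.
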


\begin{lemma}[Term substitution and interpretation]
\label{lem:termsubstinterp}
If $t' =_{\cbe} \sigma |t|$, then: 
\begin{itemize}
\item $\interp{T}_{\sigma[x\mapsto t'],\rho} = \interp{[t/x]T}_{\sigma,\rho}$
\item $\interp{\kappa}_{\sigma[x\mapsto t'],\rho} = \interp{[t/x]\kappa}_{\sigma,\rho}$
\end{itemize}
\end{lemma}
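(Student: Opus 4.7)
The plan is to prove both statements simultaneously by structural induction on $T$ and $\kappa$. The engine of the proof is a single algebraic fact that needs to be established first: erasure commutes with term-variable substitution, that is $|[t/x]s| = [|t|/x]|s|$ for every annotated term $s$. This follows by a routine induction on $s$, just reading off Figure~\ref{fig:eraser} (each erasure clause either drops an annotation or distributes through a constructor that does not bind $x$). Using this, for any annotated subterm $s$ appearing inside an interpretation clause, one obtains (after WLOG $\alpha$-renaming so the $\sigma$-bound variables are disjoint from $x$ and the free variables of $t$)
\[
\sigma[x\mapsto t']\,|s|\ =\ [t'/x](\sigma|s|)\ =_{\cbe}\ [\sigma|t|/x](\sigma|s|)\ =\ \sigma\,|[t/x]s|,
\]
where the middle step uses the hypothesis $t' =_{\cbe} \sigma|t|$ together with the fact that $=_{\cbe}$ is a congruence with respect to substitution. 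So every occurrence of $\sigma|s|$ inside the semantic clauses lines up between the two sides up to $\cbe$-equivalence, and since the clauses only ever use such terms inside $[\cdot]_{\cbe}$-closures or inside $=_{\beta\eta}$, this is exactly what is needed.

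With that in hand, the inductive cases are mostly mechanical. The $X$, $\star$, and variable-free cases are immediate. For $\{t_1\simeq t_2\}$, the $=_{\beta\eta}$ check is literally on $\sigma|t_i|$, so the erasure-substitution identity closes the case. For type-level applications $T\,t''$ the IH on $T$ combines with the identity above applied to $t''$; for $T\,T'$ it is a direct invocation of the IH on both subterms. For the binder cases ($\Pi x:T_1.T_2$, $\forall X:\kappa.T$, $\forall x:T.T'$, $\iota x:T.T'$, $\lambda X:\kappa.T$, $\lambda x:T.T'$, and the $\Pi$-kinds), I first $\alpha$-rename the bound variable $y$ so that $y\notin \{x\}\cup\textit{FV}(|t|)\cup\textit{dom}(\sigma)$; the hypothesis $t' =_{\cbe}\sigma|t|$ then still holds of the extended substitution $\sigma[y\mapsto\choice(E)]$ (or $\sigma$ with $\rho$ extended), so the IH applies to the body at each $E$ or $S$ and the resulting set-builder or partial-function expressions coincide pointwise.

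Two small points deserve care. First, the statement as written is silent about well-definedness: I should really prove that the left-hand side is defined exactly when the right-hand side is, and that they are then equal. This matters only in the partial cases ($\Pi x:T.\kappa$ with its side-condition $\interp{T}\in\mathcal{R}$), and in each the IH delivers the equality of the guarding interpretation, so definedness transfers. Second, in the $\forall x:T.T'$ case the interpretation goes through $\elcap_\star$, which collapses empty set-indexed intersections to $[\mathcal{L}]_{\cbe}$; since the IH on $T$ ensures the index set $\interp{T}$ is the same on both sides, this branch selects consistently.

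The main obstacle is really nothing deep, just bookkeeping: making sure at each binder the $\alpha$-renaming is done carefully enough that the hypothesis $t'=_{\cbe}\sigma|t|$ propagates into the enlarged substitution and that the erasure-substitution identity is applied with a capture-avoiding substitution. Once those conventions are fixed at the start of the proof, every case is a direct calculation comparing the two semantic clauses term-by-term.
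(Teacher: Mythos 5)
The paper offers no proof of this lemma at all---it is one of the ``easy proofs omitted'' at the start of the appendix---so there is no authorial argument to diverge from. Your proof is correct and is exactly the standard argument the paper implicitly relies on: structural induction on $T$ and $\kappa$, driven by the commutation $|[t/x]s| = [|t|/x]|s|$ of erasure with substitution and the fact that $=_{\cbe}$ is a congruence (so $\sigma[x\mapsto t']|s| =_{\cbe} \sigma|[t/x]s|$, which is all the semantic clauses ever inspect, since they use such terms only inside $[\cdot]_{\cbe}$-classes or $=_{\beta\eta}$-tests), with the binder renaming, definedness transfer, and $\elcap_\star$ bookkeeping handled appropriately.
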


Note that Lemma~\ref{lem:termsubstinterp} also applies to typed substitution: if
\(t' =_{\cbe} \sigma|t|\) then by erasure it is equal to \(\sigma|\chi\ T\
\mhyph\ t'|\).

\begin{lemma}[Type substitution and interpretation] 
  \label{lem:tpsubstinterp}
  \ \\
  \begin{itemize}
  \item $\interp{T}_{\sigma,\rho[X\mapsto\interp{T'}_{\sigma,\rho}]} = \interp{[T'/X]T}_{\sigma,\rho}$
  \item $\interp{\kappa}_{\sigma,\rho[X\mapsto\interp{T'}_{\sigma,\rho}]} = \interp{[T'/X]\kappa}_{\sigma,\rho}$
  \end{itemize}
\end{lemma}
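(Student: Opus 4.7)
The plan is to prove both statements simultaneously by induction on the structure of $T$ and $\kappa$, following the clauses of Figure~\ref{fig:semtp}, appealing to Barendregt's convention so that no bound variable of $T$ or $\kappa$ appears free in $T'$ or clashes with $X$. At each case, one reads off the right-hand side after pushing the substitution $[T'/X]$ through one layer of syntax, then shows the two interpretations coincide using the inductive hypotheses.

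For the base case $T = X$, both sides evaluate to $\interp{T'}_{\sigma,\rho}$: the left by the semantics of a type variable, the right because $[T'/X]X = T'$. For $T = Y$ with $Y \neq X$, both sides are $\rho(Y)$. The atomic cases $\star$ and $\{t \simeq t'\}$ are immediate: the latter because types and terms live in syntactically disjoint sets, so $[T'/X]\{t \simeq t'\} = \{t \simeq t'\}$ and the interpretation of the equality type depends only on $\sigma$. For application nodes $T_1\ T_2$ and $T_1\ t$, apply the inductive hypothesis to each immediate subexpression.

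The binding cases ($\forall Y{:}\kappa, \Pi Y{:}\kappa, \lambda Y{:}\kappa$ in types; $\iota y{:}T_1, \Pi y{:}T_1, \lambda y{:}T_1, \forall y{:}T_1$; and $\Pi y{:}T, \Pi Y{:}\kappa$ in kinds) are the only ones needing genuine care. For a type-variable binder $\forall Y{:}\kappa.\,T_1$, we want to move $[T'/X]$ inside and extend $\rho$ by $Y \mapsto S$ for $S \in \interp{\kappa}_{\sigma,\rho}$. The IH applied to $\kappa$ handles the domain; for the body we need
\[
\interp{T_1}_{\sigma,\rho[X\mapsto\interp{T'}_{\sigma,\rho}][Y\mapsto S]} \;=\; \interp{[T'/X]T_1}_{\sigma,\rho[Y\mapsto S]},
\]
which follows from the IH on $T_1$ at the extended environment $\rho[Y\mapsto S]$, using the side fact $\interp{T'}_{\sigma,\rho[Y\mapsto S]} = \interp{T'}_{\sigma,\rho}$ since $Y \notin \mathit{FV}(T')$. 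A term-variable binder such as $\iota y{:}T_1.\,T_2$ is handled analogously, now extending $\sigma$ by $y \mapsto \choice(E)$ for $E \in \interp{T_1}_{\sigma,\rho}$; since $X$ is a type variable disjoint from term variables, the two environment updates commute.

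The main obstacle, which is essentially bookkeeping, is keeping track of definedness simultaneously with equality: one should state the induction so that the LHS is defined iff the RHS is, with equality in that case. This matters in the $\Pi y{:}T.\kappa$ clause, which carries a side condition $\interp{T}_{\sigma,\rho} \in \mathcal{R}$, and in any $\lambda$/application clauses where applying a partial function silently demands that its argument lie in its domain. Using the IH for subcomponents in both directions discharges these conditions in parallel with the equality, which is why the proof is considered routine.
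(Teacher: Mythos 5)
The paper offers no proof of this lemma (it is listed among the ``easy proofs omitted''), and your overall strategy --- simultaneous structural induction on $T$ and $\kappa$ following the clauses of Figure~\ref{fig:semtp}, with Barendregt's convention, a freshness fact for the bound variable in the binder cases, and definedness tracked in parallel with the equality --- is exactly the routine argument the paper intends, so in structure there is nothing to object to.

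One justification, however, is wrong as stated, even though the cases it covers can be repaired. You claim $[T'/X]\{t \simeq t'\} = \{t \simeq t'\}$ ``because types and terms live in syntactically disjoint sets.'' Term variables and type variables are disjoint, but annotated terms embed types: $t \cdot T$, $\chi\ T\ \mhyph\ t$, $\rho$ and $\phi$ annotations all carry type or term subexpressions in which $X$ may occur free, so the substitution does act on the terms inside $\{t \simeq t'\}$; the same issue arises in the application clause $T\ t$, where ``apply the inductive hypothesis to each immediate subexpression'' cannot literally be done for the term argument, since the lemma has no clause about terms. The correct repair is to observe that the semantics consults terms only through their erasures: $\interp{t \simeq t'}_{\sigma,\rho}$ and $\interp{T\ t}_{\sigma,\rho}$ depend on $t$ only via $\sigma|t|$, and erasure (Figure~\ref{fig:eraser}) discards every type occurring in a term ($|t \cdot T| = |t|$, $|\chi\ T\ \mhyph\ t| = |t|$, and so on), hence $|[T'/X]t| = |t|$ and these clauses are invariant under the substitution. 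With that observation substituted for the disjointness claim --- and with the freshness fact $\interp{T'}_{\sigma[y \mapsto \choice(E)],\rho} = \interp{T'}_{\sigma,\rho}$ made explicit in the term-binder cases, symmetric to the one you state for type binders --- your induction goes through as the paper expects.
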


\begin{lemma}
  \label{lem:interppres}
  If $T \leadstocs T'$ and $\interp{T}_{\sigma,\rho}$ is defined, then $\interp{T'}_{\sigma,\rho}$ is also defined and equals $\interp{T}_{\sigma,\rho}$.
\end{lemma}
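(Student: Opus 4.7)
The plan is to proceed by induction on the length of the reduction sequence $T \leadsto^*_\beta T'$, reducing to the one-step case $T \leadsto_\beta T'$, and then by induction on the position of the contracted redex within $T$. This gives two kinds of cases: base cases where $T$ itself is the redex, and congruence cases where the reduction occurs in a proper subexpression.

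For the base cases, there are two forms of type-level $\beta$-redex. First, for the type-to-type redex $(\lambda X{:}\kappa.\, T_1)\cdot T_2 \leadsto_\beta [T_2/X]T_1$, I would unfold the semantics: by the clauses for $T\ T'$ and $\lambda X{:}\kappa.\,T$ in Figure~\ref{fig:semtp}, we get $\interp{(\lambda X{:}\kappa.\,T_1)\cdot T_2}_{\sigma,\rho} = \interp{T_1}_{\sigma,\rho[X\mapsto \interp{T_2}_{\sigma,\rho}]}$, which equals $\interp{[T_2/X]T_1}_{\sigma,\rho}$ directly by Lemma~\ref{lem:tpsubstinterp}. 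Second, for the type-to-term redex $(\lambda x{:}T.\,T_1)\, t \leadsto_\beta [t/x]T_1$, the semantics gives $\interp{(\lambda x{:}T.\,T_1)\, t}_{\sigma,\rho} = \interp{T_1}_{\sigma[x\mapsto \choice([\sigma|t|]_{\cbe})],\rho}$. By Lemma~\ref{lem:choice}, $[\choice([\sigma|t|]_{\cbe})]_{\cbe} = [\sigma|t|]_{\cbe}$, so in particular $\choice([\sigma|t|]_{\cbe}) =_{\cbe} \sigma|t|$, and then Lemma~\ref{lem:termsubstinterp} delivers $\interp{[t/x]T_1}_{\sigma,\rho}$. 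Definedness is automatic in these cases because all the components whose interpretations we invoke were already used in computing $\interp{T}_{\sigma,\rho}$.

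The congruence cases are handled by walking through each syntactic form of type in Figure~\ref{fig:semtp} and observing that its interpretation is built compositionally from the interpretations of its immediate subexpressions. In each case, if reduction occurs in a subexpression $T_0 \leadsto_\beta T_0'$, the inductive hypothesis gives $\interp{T_0'}_{\sigma',\rho'} = \interp{T_0}_{\sigma',\rho'}$ for the relevant environments (including extensions by $\choice(E)$ or $S$), so the containing interpretation is preserved. For reduction inside an erased term $t$ appearing as $T\ t$ or inside a $\simeq$-type, the interpretation depends on $t$ only up to $\beta\eta$-equivalence of $\sigma|t|$, which is preserved. Preservation of definedness follows from the inductive hypothesis: whenever a clause requires $\interp{T_0}_{\sigma,\rho}\in\mathcal{R}$ (as in $\Pi x{:}T.\,\kappa$), the same value is reached via either reduct.

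I expect the main obstacle to be purely bookkeeping: there is one subcase per syntactic constructor, and in several of them (dependent binders and $\iota$) the extended environment $\sigma[x\mapsto\choice(E)]$ is used in a subsequent subexpression, so one must verify that reductions under the binder remain valid in the extended environment. This is routine given that $\sigma, \rho$ do not interact with the shape of $T$ beyond providing values for free variables, but it requires explicitly writing out each clause. No genuine difficulty arises, because the substitution lemmas already bear all the weight of the analysis.
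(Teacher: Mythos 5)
Your proposal is correct and matches the paper's proof, which is stated only as a sketch: induction on the reduction derivation, with the two $\beta$-redex cases discharged by Lemmas~\ref{lem:termsubstinterp} and~\ref{lem:tpsubstinterp} (plus Lemma~\ref{lem:choice} for the term-argument case) and the remaining cases by compositionality of the semantics. You have simply filled in the details the paper leaves implicit.
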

\begin{proof}
  This follows by induction on the reduction derivation, making use of the previous substitution lemmas.
\end{proof}

\begin{proof}[Soundness (Theorem~\ref{thm:snd})]
  The following proof is adapted from~\cite{stump17}.  It proceeds by
  mutual induction on the assumed typing, kinding, or kind formation
  derivation, for each part of the lemma.  We prove the parts
  successively.  

\subsection{Proof of part (1)}

\startcase{.2cm}
\[
\infer{\Gamma \vdash \star }{\ }
\]
$\interp{\star}_{\sigma,\rho}$ is just $\mathcal{R}$, which is
defined.  

\startcase{.2cm}
\[
    \infer{\Gamma\vdash\abs{\Pi}{x}{T}{\kappa}}{\Gamma \vdash T \tpsynth \star & \Gamma,x:T\vdash\kappa} 
\]
By the IH, $\interp{T}_{\sigma,\rho}\in\mathcal{R}$, and so
$\interp{\Pi x : T.\, \kappa}_{\sigma,\rho}$ is
$(E\in\interp{T}_{\sigma,\rho} \to \interp{\kappa}_{\sigma[x\mapsto  \choice(E)],\rho})$.
The latter quantity is defined if for all
$E\in\interp{T}_{\sigma,\rho}$, $\interp{\kappa}_{\sigma[x\mapsto \choice(E)],\rho})$ is, too.  Since
$\interp{T}_{\sigma,\rho}\in\mathcal{R}$, every element $E$ of
$\interp{T}_{\sigma,\rho}$ is nonempty, as noted above, 
so $\choice(E)$ is defined.  We may apply the IH to the second
premise, since
$(\sigma[x\mapsto\choice(E)],\rho)\in\interp{\Gamma,x:T}$, because $E\in\interp{T}_{\sigma,\rho}$ (by assumption)
and $[\choice(E)]_{\cbe} = E$.  This gives definedness of the semantics
of the $\Pi$-kind.

\startcase{.2cm}
\[
   \infer{\Gamma\vdash\abs{\Pi}{X}{\kappa'}{\kappa}}{\Gamma \vdash \kappa' & \Gamma,X:\kappa'\vdash\kappa}
\]
We must show $(S\in\interp{\kappa}_{\sigma,\rho} \to \interp{\kappa}_{\sigma,\rho[X\mapsto S]})$ is defined.
This is true if $\interp{\kappa}_{\sigma,\rho}$ is defined, which is the case by
the IH applied to the first premise; and if for all
$S\in\interp{\kappa}_{\sigma,\rho}$,
$\interp{\kappa}_{\sigma,\rho[X\mapsto S]}$ is defined.  The latter is
true by the IH applied to the second premise.  

\subsection{Proof of part (2)}

\startcase{.2cm}
\[
\infer{\Gamma \vdash X \tpsynth \kappa}{(X : \kappa) \in \Gamma} 
\]
From the definition of $\interp{\Gamma}$, we obtain
$\rho(X)\in\interp{\kappa}_{\sigma,\rho}$.

\startcase{.2cm}
\[
   \infer{\Gamma\vdash\abs{\Pi}{x}{T}{T'} \tpsynth \star}{\Gamma \vdash T \tpsynth \star & \Gamma,x:T\vdash T' \tpsynth \star} 
\]
We must show $\interp{\Pi x:T.T'}_{\sigma,\rho}\in\mathcal{R}$. The
semantics defines $\interp{\Pi x:T.T'}_{\sigma,\rho}$ to be
$[A]_{\cbe}$ for a certain $A$, where if $A$ is defined, then
$A\subseteq\mathcal{L}$.  So it suffices to shown definedness. By the IH
for the first premise, $\interp{T}_{\sigma,\rho}\in\mathcal{R}$.  This
means that if $E\in\interp{T}_{\sigma,\rho}$, $\choice(E)$ is defined.
We can then apply the IH to the second premise, since
$\sigma[x\mapsto\choice(E)]\in\interp{\Gamma,x:T}$, to obtain
definedness of $\interp{T'}_{\sigma[x\mapsto\choice(E),\rho}$.

\startcase{.2cm}
\[
   \infer{\Gamma\vdash\abs{\forall}{x}{T}{T'} \tpsynth \star}{\Gamma \vdash T \tpsynth \star & \Gamma,x:T\vdash T' \tpsynth \star} 
\]
By the IH for the second premise, $\interp{T_2}_{\sigma[x\mapsto  \choice(E)],\rho}\in\mathcal{R}$, for every
$E\in\interp{T_1}_{\sigma,\rho}$ where
$\interp{T_1}_{\sigma,\rho}\in\mathcal{R}$.  By the IH for the first
premise, we indeed have $\interp{T_1}_{\sigma,\rho}\in\mathcal{R}$.
So if $\interp{T_1}_{\sigma,\rho}$ is non-empty, then the intersection of all the sets
$\interp{T_2}_{\sigma[x\mapsto \choice(E)],\rho}$ where $E\in\interp{T_1}_{\sigma,\rho}$ is a
reducibility candidate, since each of those sets is.  By the semantics
of $\forall$-types quantifying over terms, this is sufficient.  If $\interp{T_1}_{\sigma,\rho}$ is
empty, then the interpretation of the $\forall$-type is $[\mathcal{L}]_{\cbe}$ by the definition
of $\elcap_\star$, and this is in $\mathcal{R}$.

\startcase{.2cm}
\[
    \infer{\Gamma\vdash \abs{\forall}{X}{\kappa}{T} \tpsynth \star}{\Gamma \vdash \kappa & \Gamma,X:\kappa\vdash T \tpsynth \star} 
\]
Similarly to the previous case: by the IH for the second premise,
$\interp{T_2}_{\sigma,\rho[X\mapsto S}\in\mathcal{R}$, for every
$S\in\interp{\kappa}_{\sigma,\rho}$.  By the IH part for the first
premise, $\interp{\kappa}_{\sigma,\rho}$ is defined.  So the
intersection of all the sets $\interp{T_2}_{\sigma,\rho[X\mapsto S]}$
where $S\in\interp{\kappa}_{\sigma,\rho}$ is a reducibility candidate,
since each of those sets is.  The intersection is nonempty, since $\interp{\kappa}_{\sigma,\rho}$ is (as stated in a lemma above).
By the semantics of $\forall$-types
quantifying over types, this is sufficient. 

\startcase{.2cm}
\[
    \infer{\Gamma\vdash\abs{\iota}{x}{T}{T'} \tpsynth \star}{\Gamma \vdash T \tpsynth \star & \Gamma,x:T\vdash T' \tpsynth \star} 
\]
The set $\interp{\iota x:T.T'}_{\sigma,\rho}$ is explicitly defined to
be a subset of $\interp{T}_{\sigma,\rho}$, which is in $\mathcal{R}$,
by the IH applied to the first premise.  Since for any
$A\subseteq\mathcal{L}$, $[A]_{\cbe}$ is in $\mathcal{R}$, to show that
$\interp{\iota x:T.T'}_{\sigma,\rho}$ is also in $\mathcal{R}$ it suffices
to show definedness of $\interp{T'}_{\sigma[x\mapsto \choice(E)],\rho}\}$
(which is used in the predicate picking out the
particular subset of $\interp{T}_{\sigma,\rho}$), for
$E\in\interp{T}_{\sigma,\rho}$.  For such $E$, $\choice(E)$ is defined
(since $\interp{T}_{\sigma,\rho}\in\mathcal{R}$ and hence $E\in\interp{T}_{\sigma,\rho}$ is nonempty)
and in $E$,
so $\sigma[x\mapsto\choice(E)]\in\interp{\Gamma,x:T}$.  So
by the IH for the second premise,
$\interp{T'}_{\sigma[x\mapsto\choice(E),\rho]}$ is defined.

\startcase{.2cm}
\[
 \infer{\Gamma\vdash\abs{\lambda}{x}{T}{T'} \tpsynth \abs{\Pi}{x}{T}{\kappa}}{\Gamma \vdash T \tpsynth \star & \Gamma,x:T\vdash T'\tpsynth\kappa}
\]
By the semantics, $\interp{\lambda x:T.T'}_{\sigma,\rho}$ is
$(E\in\interp{T}_{\sigma,\rho} \mapsto \interp{T'}_{\sigma[x\mapsto
  \choice(E)],\rho})$.  We must show that this (meta-level) function
is in $\interp{\Pi x:T.\kappa}_{\sigma,\rho}$.  By the semantics of
kinds, the latter quantity, if defined, is
$(E\in\interp{T}_{\sigma,\rho} \to_{\cbe}
\interp{\kappa}_{\sigma[x\mapsto \choice(E)],\rho})$.
By the IH for the first premise, $\interp{T}_{\sigma,\rho}\in\mathcal{R}$.
So we must just show that for any $E\in\interp{T}_{\sigma,\rho}$,
$\interp{T'}_{\sigma[x\mapsto
  \choice(E)],\rho}\in\interp{\kappa}_{\sigma[x\mapsto
  \choice(E)],\rho}$.  But this follows by the IH for the second
premise.

\startcase{.2cm}
\[
\infer{\Gamma\vdash\abs{\lambda}{X}{\kappa}{T'} \tpsynth \abs{\Pi}{X}{\kappa}{\kappa'}}
      {\Gamma \vdash \kappa & \Gamma,X:\kappa\vdash T'\tpsynth\kappa'} 
\]
This case is an easier version of the previous one.  It suffices to
assume an arbitrary $S\in\interp{\kappa}_{\sigma,\rho}$ and show
$\interp{T'}_{\sigma,\rho[X\mapsto S]}\in\interp{\kappa'}_{\sigma,\rho[X\mapsto S]}$.  But this follows
by the IH applied to the second premise.  And we have definedness of
$\interp{\kappa}_{\sigma,\rho}$ by the IH for the first premise.

\startcase{.2cm}
\[
   \infer{\Gamma\vdash T\ t \tpsynth [t/x]^{T'}\ \kappa}{\Gamma\vdash T \tpsynth
     \abs{\Pi}{x}{T'}{\kappa} & \Gamma\vdash t \tpcheck T'}  
\]
By the IH for the first premise,
$\interp{T}_{\sigma,\rho}\in\interp{\Pi x:T'.\kappa}_{\sigma,\rho}$.
By the semantics of $\Pi$-kinds, this means that
$\interp{T}_{\sigma,\rho}$ is a function which given any
$E\in\interp{T'}_{\sigma,\rho}$, will produce a result in
$\interp{\kappa}_{\sigma[x\mapsto \choice(E)],\rho}$.  By the
semantics of type applications, $\interp{T\ t}_{\sigma,\rho}$ is equal
to $\interp{T}_{\sigma,\rho}([\sigma |t|]_{\cbe})$.  This is defined,
since $[\sigma |t|]_{\cbe}\in\interp{T'}_{\sigma,\rho}$, by the IH for
the second premise; note that $\interp{T'}_{\sigma,\rho}$ is defined
since otherwise $\interp{\Pi x:T'.\kappa}_{\sigma,\rho}$ would not be defined.
The result of applying the function is thus indeed
in $\interp{[t/x]^{T'}\ \kappa}_{\sigma,\rho}$, since \(|\chi\ T'\ \mhyph\ t| =
|t|\) (recall the shorthand \([t/x]^{T'} = [\chi\ T'\ \mhyph\ t/x]\)), and with  
Lemma~\ref{lem:termsubstinterp} we have that the interpretation equals
$\interp{\kappa}_{\sigma[x\mapsto \choice([\sigma |t|]_{\cbe})],\rho}$
(the codomain of the function being applied).

\startcase{.2cm}
\[
  \infer{\Gamma\vdash T_1 \cdot T_2 \tpsynth [T_2/X]\kappa_1}{\Gamma\vdash T_1 \tpsynth \abs{\Pi}{X}{\kappa_2}{\kappa_1} & \Gamma\vdash T_2 \tpsynth \kappa_2' & \kappa_2 \cong \kappa_2'}
\]
By the IH applied to the first premise,
$\interp{T_1}_{\sigma,\rho}\in\interp{\abs{\Pi}{X}{\kappa_2}{\kappa_1}}_{\sigma,\rho}$.
By the semantics of $\Pi$-kinds, this means that for any
$S\in\interp{\kappa_2}_{\sigma,\rho}$, $\interp{T_1}_{\sigma,\rho}\ S$ is in
$\interp{\kappa_1}_{\sigma,\rho[X\mapsto S]}$.  By the IH for the second premise, we have 
$\interp{T_2} \in \interp{\kappa_2'}_{\sigma,\rho}$, and by the IH for the third premise,
we have $\interp{\kappa_2}_{\sigma,\rho} = \interp{\kappa_2'}_{\sigma,\rho}$.  So
we get $\interp{T_1}_{\sigma,\rho}(\interp{T_2}_{\sigma,\rho})\in \interp{\kappa_1}_{\sigma,\rho[X\mapsto \interp{T'}_{\sigma,\rho}]}$,
which suffices by Lemma~\ref{lem:tpsubstinterp}.

\startcase{.2cm}
\[
    \infer{\Gamma\vdash \{ t \simeq t' \} : \star}{\textit{FV}(t\ t')\subseteq\textit{dom}(\Gamma)}
\]
Either $\sigma |t| =_{\cbe} \sigma |t'|$ or not.  Either way, the interpretation is defined and in $\mathcal{R}$, since
$\textit{FV}(t\ t')\subseteq\textit{dom}(\sigma)$ (as an easy consequence of $(\sigma,\rho)\in\interp{\Gamma}$).

\subsection{Proof of parts (3) and (4)}

\startcase{.2cm}
\[
    \infer{\Gamma\vdash x\tpsynth T}{(x : T)\in\Gamma} 
\]
This follows from the definition of $\interp{\Gamma}$.



\startcase{.2cm}
\[
  \infer{\Gamma\vdash t\tpcheck T}{\Gamma\vdash t\tpsynth T' & T' \cong T}
\]
By the IH applied to the first premise, we have 
$[\sigma|t|]_{\cbe}\in\interp{T'}_{\sigma,\rho}\in\mathcal{R}$. By assumption, $\interp{T}_{\sigma,\rho}\in\mathcal{R}$,
and so by the IH applied to the second premise, we have $[\sigma|t|]_{\cbe}\in\interp{T'}_{\sigma,\rho} = \interp{T}_{\sigma,\rho}$.

\startcase{.2cm}
\[
  \infer{\Gamma\vdash \absu{\lambda}{x}{t} \tpcheck T}{T \leadstocs \abs{\Pi}{x}{T_1}{T_2} & \Gamma,x:T_1\vdash t\tpcheck T_2}
\]
To show $[\sigma \lambda
x.|t|]_{\cbe}\in\interp{\abs{\Pi}{x}{T_1}{T_2}}_{\sigma,\rho}$ (noting that the
latter is defined and in $\mathcal{R}$ by assumption), it suffices to assume an
arbitrary $E\in\interp{T_1}_{\sigma,\rho}$, and show
$[[\choice(E)/x]\sigma|t|]_{\cbe}\in\interp{T_2}_{\sigma[x\mapsto\choice(E)],\rho}$.
By the IH, we have
$[\sigma[x\mapsto\choice(E)]|t|]_{\cbe}\in\interp{T_2}_{\sigma[x\mapsto\choice(E)],\rho}$.
But $[\sigma[x\mapsto\choice(E)]t]_{\cbe} = [[\choice(E)/x]\sigma|t|]_{\cbe}$,
so this is sufficient.

\startcase{.2cm}
\[
  \infer{\Gamma\vdash t\ t' \tpsynth [t'/x]^{T'}T}{\Gamma\vdash t \tpsynthleads \abs{\Pi}{x}{T'}{T} & \Gamma\vdash t' \tpcheck T'} \\ \\
\]
By the IH applied to the first premise, $[\sigma
|t|]_{\cbe}\in\interp{\Pi x:T'.T}_{\sigma,\rho}\in\mathcal{R}$.  This
means that there exists a $\lambda$-abstraction $\lambda x.\hat{t}$
such that $\lambda x.\hat{t} =_{\cbe} \sigma |t|$, by the semantics of $\Pi$-types.
Furthermore, for any $E\in\interp{T'}_{\sigma,\rho}$,
$[[\choice(E)/x]\hat{t}]_{\cbe}\in\interp{T}_{\sigma[x\mapsto\choice(E)],\rho}$.
By the IH applied to the second premise, $[\sigma |t'|]_{\cbe}\in\interp{T'}_{\sigma,\rho}$,
so we can instantiate the quantifier in the previous formula to obtain
\[
 [[\choice([\sigma |t'|]_{\cbe})/x]\hat{t}]_{\cbe}\in\interp{T}_{\sigma[x\mapsto\choice([\sigma |t'|]_{\cbe})],\rho}
\]
By Lemma~\ref{lem:termsubstinterp}, this is equivalent to
\[
 [[\choice([\sigma |t'|]_{\cbe})/x]\hat{t}]_{\cbe}\in\interp{[t'/x]T_2}_{\sigma,\rho}
\]
Since $\sigma |t\ t'| =_{\cbe} (\lambda x.\hat{t})\ \sigma |t'| =_{\cbe} [[\choice([\sigma |t'|]_{\cbe})/x]\hat{t}$,
this is sufficient.

\startcase{.2cm}
\[
  \infer{\Gamma\vdash \absu{\Lambda}{X}{t} \tpcheck T'}
  {T' \leadstocs \abs{\forall}{X}{\kappa}{T} & \Gamma,X:\kappa\vdash t \tpcheck T}
\]
By the IH, $[\sigma |t|]_{\cbe}\in\interp{T}_{\sigma,\rho[X\mapsto S]}$, for all $S\in\interp{\kappa}_{\sigma,\rho}$.
This is sufficient to prove $[\sigma |\absu{\Lambda}{X}{t}|]_{\cbe}\in\interp{\forall X:\kappa.T}_{\sigma,\rho}$, by the semantics
of $\forall$-types and definition of erasure.

\startcase{.2cm}
\[
  \infer{\Gamma\vdash t \cdot T' \tpsynth [T'/X]T}
  {\Gamma\vdash t \tpsynthleads \abs{\forall}{X}{\kappa}{T} & \Gamma\vdash T' \tpsynth \kappa' & \kappa'\cong\kappa} \\ \\
\]
By the semantics of $\forall$-types and the IH applied to the first
premise, we have $[\sigma |t|]_{\cbe}\in\interp{T}_{\sigma,\rho[X\mapsto
  S]}$, for all $S\in\interp{\kappa}_{\sigma,\rho}$.
By applying the IH twice, once to the second premise and once to the third, we
have, we have \(\interp{T'}_{\sigma,\rho} \in \interp{\kappa}_{\sigma,\rho}\).
So, can derive $[\sigma |t|]_{\cbe}\in\interp{T}_{\sigma,\rho[X\mapsto \interp{T'}_{\sigma,\rho}]}$.
By Lemma~\ref{lem:tpsubstinterp},
this is equivalent to the required $[\sigma |t|]_{\cbe}\in\interp{[T'/X]T}_{\sigma,\rho}$,
using also the definition of erasure.

\startcase{.2cm}
\[
  \infer{\Gamma\vdash \absu{\Lambda}{x}{t} \tpcheck T}
  {T \leadstocs \abs{\forall}{x}{T_1}{T_2} & \Gamma,x:T_1\vdash t \tpcheck T_2 & x\not\in\textit{FV}(|t|)}
\]

By the IH applied to the second premise, we have
$[\sigma[x\mapsto\choice(E)]
|t|]_{\cbe}\in\interp{T_2}_{\sigma[x\mapsto\choice(E)],\rho}$, for any
$E\in\interp{T_1}_{\sigma,\rho}$.
This is because $\interp{T_1}_{\sigma,\rho}\in\mathcal{R}$, since
$\interp{\abs{\forall}{x}{T_1}{T_2}}_{\sigma,\rho}$ is in $\mathcal{R}$ and 
hence defined, by assumption.
Since $x\not\in\textit{FV}(|t|)$, we know
$[[\sigma[x\mapsto\choice(E)]|t|]_{\cbe} = [\sigma |t|]_{\cbe}$.
By the semantics of $\forall$-types and definition of erasure, this suffices to
show the desired conclusion.

\startcase{.2cm}
\[
  \infer{\Gamma\vdash t\ \mhyph t' \tpsynth [t'/x]^{T'}T}{\Gamma\vdash t \tpsynthleads \abs{\forall}{x}{T'}{T} & \Gamma\vdash t' \tpcheck T'} \\ \\
\]
The result follows easily by the IH applied to the premises, the
semantics of $\forall$-types, definition of erasure, and Lemma~\ref{lem:termsubstinterp}.

\startcase{.2cm}
\[
  \infer{\Gamma\vdash [ t_1 , t_2 ] \tpcheck T}
  {
    \begin{array}{cc}
      T \leadstocs \abs{\iota}{x}{T_1}{T_1}
      & \Gamma \vdash t_1 \tpcheck T_1
      \\ \Gamma \vdash t_2 \tpcheck [t/x]^{T_1}\ T_2
      & |t_1| =_{\beta\eta} |t_2|
    \end{array}
  }
\]
By the IH, we have $[\sigma |t_1|]_{\cbe}\in\interp{T_1}_{\sigma,\rho}$ and
$[\sigma |t_2|]_{\cbe}\in\interp{[t_1/x]T_2}_{\sigma,\rho}$.
By Lemma~\ref{lem:termsubstinterp}, the latter is equivalent to
$[\sigma |t|]_{\cbe}\in\interp{T_2}_{\sigma[x\mapsto\choice([\sigma
  |t|]_{\cbe}),\rho}$.
These two facts about $[\sigma |t|]_{\cbe}$ are
sufficient, by the semantics of $\iota$-types, for the desired
conclusion, using also the fact (from the fourth premise) that $\sigma|t| =_{\cbe} \sigma|t'|$.

\startcase{.2cm}
\[
  \infer{\Gamma\vdash t.1 \tpsynth T}{\Gamma\vdash t \tpsynthleads \abs{\iota}{x}{T}{T'}}
\]
The desired conclusion follows easily from the IH and the semantics of $\iota$-types.

\startcase{.2cm}
\[
  \infer{\Gamma\vdash t.2 \tpsynth [t.1/x] T'}{\Gamma\vdash t \tpsynthleads \abs{\iota}{x}{T}{T'}}
\]
Similar to the previous case, additionally using Lemma~\ref{lem:termsubstinterp}.

\startcase{.2cm}
\[
  \infer{\Gamma\vdash \beta\{t'\} \tpcheck T}
  {T \leadstocs \{t_1 \simeq t_2\} \quad \textit{FV}(t')\subseteq \textit{dom}(\Gamma) \quad |t_1| =_{\beta\eta} |t_2|}
\]
By the third premise \(|t_1| =_{\beta\eta} |t_2|\) it follows that \(\sigma|t_1|
=_{\beta\eta} \sigma|t_2|\), and also by the second premise and the fact that
\(\sigma \in \interp{\Gamma}\) it follows that
\(\textit{FV}(t_1\ t_2) \subseteq \textit{dom}(\sigma)\).
Also, we see \(\sigma|t'|\) is a closed term.
So, $[\sigma|t'|]_{\cbe}\in\interp{\{ t_1 \simeq t_2 \}}_{\sigma,\rho}$ follows
directly from the semantics of equality types.

\startcase{.2cm}
\[
  \infer{\Gamma\vdash \delta\ \mhyph\ t \tpcheck T}
  {
    \Gamma \vdash t \tpsynth T'
    \quad T' \cong \{\absu{\lambda}{x}{\absu{\lambda}{y}{x}} \simeq \absu{\lambda}{x}{\absu{\lambda}{y}{y}}\}
  }
\]
By the inductive hypothesis, \([\sigma|t|]_{\cbe} \in \interp{T'} \in
\mathcal{R}\).
It is easy to see that the meaning of \(\{\absu{\lambda}{x}{\absu{\lambda}{y}{x}
\simeq \absu{\lambda}{x}{\absu{\lambda}{y}}}\}\) exists and is in
\(\mathcal{R}\), and is in fact the empty set.
By mutual induction on the second premise we have \([\sigma|t|]_{\cbe} \in
\interp{\{\absu{\lambda}{x}{\absu{\lambda}{y}{x} \simeq
    \absu{\lambda}{x}{\absu{\lambda}{y}}}\}}_{\sigma,\rho}\), a contradiction.

\startcase{.2cm}
\[
  \infer{\Gamma \vdash \rho\ t\ @x\langle t_2 \rangle.T'\ \mhyph\ t' \tpcheck T}
  {
    \begin{array}{ccc}
      \Gamma \vdash t \tpsynthleads \{t_1 \simeq t_2'\}
      & \textit{FV}(t_2) \subseteq \textit{dom}(\Gamma)
      & |t_2'| =_{\beta\eta} |t_2|
      \\ \Gamma \vdash [t_2/x] T' \tpsynth \star
      & \Gamma \vdash t' \tpcheck [t_2/x]T'
      & [t_1/x]T' \cong T
    \end{array}
  }
\]
By the IH applied to the first premise in the first row, $\sigma|t_1| =_{\cbe}
\sigma|t_2'|$.
With the second and third premise, and from the assumption \(\sigma \in
\interp{\Gamma}\), we have \(\sigma|t_2'| =_{\cbe} \sigma|t_2|\) (and
\(\sigma|t_2|\) is closed), so we obtain
\([\sigma|t_1|]_{\cbe} = [\sigma|t_2|]_{\cbe}\).
By the IH applied to the first premise of the second row,
\(\interp{[t_2/x]T}_{\sigma,\rho} \in \mathcal{R}\) and so is defined.
By the IH applied to the second premise in the second row,
\([\sigma|t'|]_{c\beta\eta} \in \interp{[t_2/x]T'}_{\sigma,\rho}\).
By the IH applied to the second premise in the second row,
\(\interp{T}_{\sigma,\rho} = \interp{[t_1/x]T'}_{\sigma,\rho}\).
The result then follows by applying Lemma~\ref{lem:termsubstinterp}.

\startcase{.2cm}
\[
  \infer{\Gamma \vdash \chi\ T\ \mhyph\ t\ \tpsynth T}
  {\Gamma \vdash T \tpsynth \star & \Gamma \vdash t \tpcheck T}
\]
We apply the IH for the first premise to get that
$\interp{T}_{\sigma,\rho}$ is in $\mathcal{R}$ and hence defined.
Using this, we apply the IH on the second premise to get \([\sigma|t|]_{\cbe}
\in \interp{T}_{\sigma,\rho}\), which by the definition of erasure is what we
must show.

\startcase{.2cm}
\[
  \infer{\Gamma\vdash \varphi\ t\ \mhyph\ t'\ \{t''\} \Leftrightarrow T}
  {
    \Gamma\vdash t\tpcheck \{t' \simeq t''\}
    & \Gamma\vdash t' \Leftrightarrow T
    & \textit{FV}(t'') \subseteq \textit{dom}(\Gamma)
  }
\]
By the IH for the first premise, $\sigma|t'| =_{\cbe} \sigma|t''|$ (we can see
that \(\textit{FV}(t'\ t'') \subseteq \Gamma\), so \(\interp{\{t' \simeq
  t''\}}_{\sigma,\rho}\) is defined).
By the IH for the second premise,
$[\sigma|t'|]_{\cbe}\in\interp{T}_{\sigma,\rho}$ (in the case of type synthesis,
the IH also tells us \(\interp{T}_{\sigma,\rho}\) is defined).
This suffices for the desired conclusion, using also the definition of erasure
($|\varphi\ t\ \mhyph\ t'\{t''\}| = |t''|$).

\subsection*{Proof of part (5) }
We show the cases for the non-congruential rules of Figure~\ref{fig:conv}.

\startcase{.2cm}
\[
  \infer{
    T_1 \cong T_2
  }{
    T_1 \leadstocs T_1' \nleadstoc
    \quad T_2 \leadstocs T_2' \nleadstoc
    \quad T_1' \cong^{\text{t}} T_2'
  }
\]
By Lemma~\ref{lem:interppres}, we have
\[
\begin{array}{lll}
  \interp{T_1}_{\sigma,\rho} & = & \interp{T_1'}_{\sigma,\rho}\\
  \interp{T_2}_{\sigma,\rho} & = & \interp{T_2'}_{\sigma,\rho}
\end{array}
\]
By the IH for the third premise, we have $\interp{T_1'}_{\sigma,\rho} = \interp{T_2'}_{\sigma,\rho}$,
which suffices.

\startcase{.2cm}
\[
    \infer{T_1\ t_1 \cong^{\text{t}} T_2\ t_2}{T_1 \cong^{\text{t}} T_1 & |t_1| =_{\beta\eta} |t_2|}
\]
By the semantics, $\interp{T_1\ t_1}_{\sigma,\rho} =
\interp{T_1}_{\sigma,\rho}([\sigma|t_1|]_{\cbe})$.  By the second premise
and the IH for the first premise, this equals
$\interp{T_2}_{\sigma,\rho}([\sigma|t_2|]_{\cbe})$, as required.

\startcase{.2cm}
\[
    \infer{\{ t_1 \simeq t_2 \} \cong^{\text{t}} \{ t_1'\ \simeq t_2' \}}{|t_1| =_{\beta\eta} |t_1'| & |t_2| =_{\beta\eta} |t_2'|}
\]
This follows easily from the premises and the semantics of equality types.

\subsection*{Proof of Part (6)}

The convertibility relation for kinds consists entirely of congruential rules
for quantification over types and kinds.

\end{proof}

\section{Proof of Theorem~\ref{thm:cedille-termination}}
\begin{proof}[Proof (of Theorem~\ref{thm:cedille-termination})]
Theorem \ref{thm:snd} implies that since
$t'\ t$ is closed and of type $\Pi x:T_1.\ T_2$, we have
$[|t'\ t|]_{c\beta\eta}\in \interp{\Pi x:T_1.\ T_2}_{\cdot,\rho}$,
where $[|t'\ t|]_{c\beta\eta}$ is the set of closed terms which are
$\beta\eta$-equivalent to $|t'\ t|$; and
$(\cdot,\rho)\in\interp{\Gamma}$ gives interpretations $\cdot$ for
term- and $\rho$ for type-variables in $\Gamma$.  By the semantics of
types defined in Figure \ref{fig:semtp}, the
interpretation of a $\Pi$-type consists of sets of the form
$[\lambda x.\,t']_{c\beta\eta}$.  So we have that $|t'\ t|$ is $\beta\eta$-equivalent to
$\lambda x.\, t'$ for some $x, t'$.  
Since $|t'\ t|$ is $\beta$-equivalent to $t$, we know $t =_{\beta\eta} \lambda x.\, t'$.
It is then an easy consequence of the
standardization theorem for untyped lambda calculus that $|t|$ is
call-by-name normalizing (cf.~\cite{Kashima2000}).
\end{proof}

\section{Proof of Theorems~\ref{thm:syntactic-kind-pres} and \ref{thm:judge-valid}}
\label{sec:proof-syntactic-kind-pres}

First a few lemmas (easy cases omitted):

\begin{lemma}
  \label{lem:ctxt-conv-class}
  Let \(\kappa_1,\kappa_1'\) be kinds such that \(\kappa_1 \cong \kappa_1'\).
  \begin{itemize}
  \item If \(\Gamma_1,X:\kappa_1,\Gamma_2 \vdash \kappa_2\) and \(\Gamma_1
    \vdash \kappa_1'\) then \(\Gamma_1,X:\kappa_1',\Gamma_2 \vdash \kappa_2\)
   
  \item If \(\Gamma_1,X:\kappa_1,\Gamma_2 \vdash T \tpsynth \kappa_2\) and
    \(\Gamma_1 \vdash \kappa_1'\) then \(\Gamma_1,X:\kappa_1',\Gamma_2 \vdash T
    \tpsynth \kappa'_2\) for some \(\kappa_2' \cong \kappa_2\)

  \item If \(T_3 \cong T_4\) and \(\Gamma_1,X:\kappa_1,\Gamma_2 \vdash t
    \tpcheck T_3\) then \(\Gamma_1,X:\kappa_1',\Gamma_2 \vdash t \tpcheck T_4\)
    
  \item If \(\Gamma_1,X:\kappa_1,\Gamma_2 \vdash t \tpsynth T\) and
    \(\Gamma_1 \vdash \kappa_1'\) then \(\Gamma_1,X:\kappa_1',\Gamma_2 \vdash t
    \tpsynth T\).
  \end{itemize}
  Furthermore, the resulting typing derivations have depths no
  larger than the assumed ones.
\end{lemma}

\begin{proof}
  By mutual induction on the assumed derivation, and mutually with
  Lemma~\ref{lem:ctxt-conv-class2}.
  The measure, which omits the size of derivations of the convertibility
  relation, is used to ensure that each mutually inductive call is well-founded.
  We show the interesting cases.

  \startcase{.2cm}
  \[
    \infer{\Gamma_1,X_1:\kappa_1,\Gamma_2 \vdash X \tpsynth \kappa}{(X : \kappa) \in \Gamma_1,X_1:\kappa_1,\Gamma_2}
  \]
  We have two subcases to consider.
  If \(X = X_1\), then by the rule we obtain \(\Gamma_1,X_1:\kappa_1',\Gamma_2
  \vdash X_1 \tpsynth \kappa_1'\), as desired.
  Otherwise, by the rule we obtain \(\Gamma_1,X_1:\kappa_1',\Gamma_2
  \vdash X \tpsynth \kappa\).
  In both cases, the resulting derivation has the same depth (i.e., 1) as we
  started with.

  \startcase{.2cm}
  \[
    \infer{
      \Gamma_1,X_1:\kappa_1,\Gamma_2 \vdash\abs{\lambda}{x}{T}{T'} \tpsynth
      \abs{\Pi}{x}{T}{\kappa}
    }{
      \Gamma_1,X_1:\kappa_1,\Gamma_2 \vdash T \tpsynth \star
      & \Gamma_1,X_1:\kappa_1,\Gamma_2 ,x:T\vdash T'\tpsynth\kappa
    }
  \]
  By the IH on the first premise, \(\Gamma_1,X_1:\kappa_1',\Gamma_2 \vdash T
  \tpsynth \star\) at no greater depth (note \(\star\) is convertible only with
  itself).
  By the IH on the second premise, \(\Gamma_1,X_1:\kappa_1',\Gamma_2,x:T \vdash
  T' \tpsynth \kappa'\) (at no greater depth) for some \(\kappa' \cong \kappa\).
  By the rule, we have \(\Gamma_1,X_1:\kappa_1',\Gamma_2 \vdash
  \abs{\lambda}{x}{T}{T'} \tpsynth \abs{\Pi}{x}{T}{\kappa'}\), which is
  congruent with the original type in the conclusion.
  We see that the measure is preserved.

  \startcase{.2cm}
  \[
    \infer{
      \Gamma_1,X_1:\kappa_1,\Gamma_2 \vdash \abs{\lambda}{X}{\kappa}{T'} \tpsynth
      \abs{\Pi}{X}{\kappa}{\kappa'}
    }{
      \Gamma_1,X_1:\kappa_1,\Gamma_2 \vdash \kappa
      & \Gamma_1,X_1:\kappa_1,\Gamma_2,X:\kappa \vdash  T' \tpsynth \kappa'
    }
  \]
  By the IH on the first premise, \(\Gamma_1,X_1:\kappa_1',\Gamma_2 \vdash
  \kappa\).
  By the IH on the second premise, \(\Gamma_1,X_1:\kappa_1',\Gamma_2,X:\kappa
  \vdash T' \tpsynth \kappa''\) for some \(\kappa'' \cong \kappa'\).
  By the rule, \(\Gamma_1,X_1:\kappa_1',\Gamma_2 \vdash
  \abs{\lambda}{X}{\kappa}{T'} \tpsynth \abs{\Pi}{X}{\kappa}{\kappa''}\), which
  is convertible with the original kind synthesized for this type.
  Since the depths of the premises for the resulting derivation are the same as
  the depths for the premises of the assumed one, the measure is preserved.

  \startcase{.2cm}
  \[
    \infer{
      \Gamma_1,X_1:\kappa_1,\Gamma_2 \vdash T\ t \tpsynth [t/x]^{T'}\kappa
    }{
      \Gamma_1,X_1:\kappa_1,\Gamma_2 \vdash T \tpsynth \abs{\Pi}{x}{T'}{\kappa}
      & \Gamma_1,X_1:\kappa_1,\Gamma_2 \vdash t \tpcheck T'
    }
  \]
  By the IH on the first premise, \(\Gamma_1,X_1:\kappa_1',\Gamma_2 \vdash T
  \tpsynth \abs{\Pi}{x}{T''}{\kappa''}\) for some \(T'' \cong T'\) and
  \(\kappa'' \cong \kappa\).
  By the IH on the second premise, \(\Gamma_1,X_1:\kappa_1',\Gamma_2 \vdash t
  \tpcheck T''\).
  By the rule, \(\Gamma_1,X_1:\kappa_1',\Gamma_2 \vdash T\ t \tpsynth
  [t/x]^{T''}\kappa''\), convertible with the given type \([t/x]^{T'}\kappa\).
  We see that the measure is preserved.

  \startcase{.2cm}
  \[
    \infer{
      \Gamma_1,X_1:\kappa_1,\Gamma_2 \vdash T \cdot T' \tpsynth
      [T'/X]\kappa
    }{
      \Gamma_1,X_1:\kappa_1,\Gamma_2 \vdash T \tpsynth
      \abs{\Pi}{X}{\kappa'}{\kappa}
      & \Gamma_1,X_1:\kappa_1,\Gamma_2 \vdash T' \tpsynth \kappa''
      & \kappa' \cong \kappa''}
  \]
  By the IH on the first premise, \(\Gamma_1,X_1:\kappa_1,\Gamma_2 \vdash T
  \tpsynth \abs{\Pi}{X}{\kappa_2'}{\kappa_2}\) for some \(\kappa_2' \cong
  \kappa'\), \(\kappa_2 \cong \kappa\).
  By the IH on the second premise, \(\Gamma_1,X_1:\kappa_1,\Gamma_2 \vdash T'
  \tpsynth \kappa_2''\) for some \(\kappa_2'' \cong \kappa''\).
  By transitivity and the third premise, \(\kappa_2' \cong \kappa_2''\).
  By the rule, \(\Gamma_1,X_1:\kappa_1',\Gamma_2 \vdash T \cdot T' \tpsynth
  [T'/X]\kappa_2\), with this kind convertible to \([T'/X]\kappa\) by
  congruence.
  We see the measure is preserved.

  \startcase{.2cm}
  \[
    \infer{
      \Gamma_1,X_1:\kappa_1,\Gamma_2 \vdash t\tpcheck T_3
    }{
      \Gamma_1,X_1:\kappa_1,\Gamma_2 \vdash t\tpsynth T'
      & T' \cong T_3
    }
  \]
  By the IH on the first premise, \(\Gamma_1,X_1:\kappa_1',\Gamma_2 \vdash t
  \tpsynth T'\).
  By assumption and transitivity of congruence, \(T' \cong T_4\) (where \(T_4\)
  is the type given to us in the proof).
  By the rule, \(\Gamma_1,X_1:\kappa_1',\Gamma_2 \vdash t \tpcheck T_4\), and we
  see the measure is preserved.

  \startcase{.2cm}
  \[
    \infer{
      \Gamma_1,X_1:\kappa_1,\Gamma_2 \vdash \absu{\lambda}{x}{t} \tpcheck T_3
    }{
      T_3 \leadstocs \abs{\Pi}{x}{T_1}{T_2}
      & \Gamma_1,X_1:\kappa_1,\Gamma_2,x:T_1\vdash t\tpcheck T_2
    }
  \]
  By the first premise, \(T_3 \cong \abs{\Pi}{x}{T_1}{T_2}\).
  By assumption and transitivity of the convertibility relation, \(T_4 \cong
  \abs{\Pi}{x}{T_1}{T_2}\).
  This means \(T_4 \leadstocs \abs{\Pi}{x}{T_1'}{T_2'}\) for some \(T_1'
  \cong T_1\) and \(T_2' \cong T_2\).
  By mutual induction with Lemma~\ref{lem:ctxt-conv-class2} on the second
  premise, we have a derivation of \(\Gamma_1,X_1:\kappa_1,\Gamma_2,x:T_1'
  \vdash t \tpcheck T_2'\) that is no deeper than that premise.
  So, we are entitled to use the IH on this new derivation to obtain
  \(\Gamma_1,X_1:\kappa_1',\Gamma_2,x:T_1' \vdash t \tpcheck T_2'\), which is
  also no deeper.
  By the rule, \(\Gamma_1,X_1:\kappa_1',\Gamma_2 \vdash \absu{\lambda}{x}{t}
  \tpcheck T_4\), and the measure is preserved.
  
  \startcase{.2cm}
  \[
    \infer{
      \Gamma_1,X_1:\kappa_1,\Gamma_2\vdash \absu{\Lambda}{X}{t} \tpcheck T_3
    }{
      T_3 \leadstocs \abs{\forall}{X}{\kappa}{T}
      & \Gamma_1,X_1:\kappa_1,\Gamma_2,X:\kappa\vdash t \tpcheck T
    }
  \]
  By the first premise, \(T_3 \cong \abs{\forall}{X}{\kappa}{T}\), so by
  assumption and transitivity of convertibility, \(T_4 \cong
  \abs{\forall}{X}{\kappa}{T}\).
  This means \(T_4 \leadstocs \abs{\forall}{X}{\kappa'}{T'}\) for some
  \(\kappa' \cong \kappa\) and \(T' \cong T\).
  We apply the IH once on the second premise to obtain a derivation of
  \(\Gamma_1,X_1:\kappa_1,\Gamma_2,X:\kappa' \vdash t \tpcheck T'\), and then
  again to obtain \(\Gamma_1,X_1:\kappa_1',\Gamma_2,X:\kappa' \vdash t \tpcheck
  T'\), noting this derivation's depth is no greater than the second premise.
  By the rule, \(\Gamma_1,X_1:\kappa_1,\Gamma_2 \vdash \absu{\Lambda}{X}{t}
  \tpcheck T_4\), and the measure is preserved.

  \startcase{.2cm}
  \[
    \infer{
      \Gamma_1,X_1:\kappa_1,\Gamma_2 \vdash \beta\{t'\} \tpcheck T_3
    }{
      T_3 \leadstocs \{t_1 \simeq t_2\}
      & \textit{FV}(t')\subseteq \textit{dom}(\Gamma_1,X_1:\kappa_1,\Gamma_2)
      & |t_1| =_{\beta\eta} |t_2|
    }
  \]
  From the first premise, assumption, and transitivity of convertibility, \(T_4
  \cong \{t_1 \simeq t_2\}\).
  This means that \(T_4 \leadstocs \{t_3 \simeq t_4\}\) for some
  \(t_3,t_4\) such that \(|t_3| =_{\beta\eta} |t_1|\) and \(|t_4| =_{\beta\eta}
  |t_2|\).
  From this, the third premise, and transitivity of convertibility, we obtain
  \(|t_3| =_{\beta\eta} |t_4|\).
  We also see from the second premise that \(\textit{FV}(t') \subseteq
  \textit{dom}(\Gamma_1,X_1:\kappa_1',\Gamma_2)\).
  So, we use the rule to conclude that \(\Gamma_1,X_1:\kappa_1',\Gamma_2 \vdash
  \beta\{t'\} \tpcheck T_4\), and the depth (1) is preserved.

  \startcase{.2cm}
  \[
    \infer{\Gamma_1,X_1:\kappa_1,\Gamma_2 \vdash \rho\ t\ @x\langle t_2 \rangle.T'\ \mhyph\ t' \tpcheck T}
    {
      \begin{array}{ccc}
        \Gamma_1,X_1:\kappa_1,\Gamma_2 \vdash t \tpsynthleads \{t_1 \simeq t_2'\}
        & \textit{FV}(t_2) \subseteq \textit{dom}(\Gamma_1,X_1:\kappa_1,\Gamma_2)
        & |t_2'| =_{\beta\eta} |t_2|
        \\ \Gamma_1,X_1:\kappa_2,\Gamma_2 \vdash [t_2/x] T' \tpsynth \star
        & \Gamma_1,X_1:\kappa_1,\Gamma_2 \vdash t' \tpcheck [t_2/x]T'
        & [t_1/x]T' \cong T
      \end{array}
    }
  \]
  By the IH on the first premise, \(\Gamma_1,X_1:\kappa_1',\Gamma_2 \vdash t'
  \tpsynthleads \{t_1 \simeq t_2'\}\).
  By the IH on the first premise of the second row,
  \(\Gamma_1,X_1:\kappa_1',\Gamma_2 \vdash [t_2/x]\ T \tpsynth \star\)
  (as \(\star\) is only convertible with itself).
  By the IH the second premise of the second row,
  \(\Gamma_1,X_1:\kappa_1',\Gamma_2 \vdash t \tpcheck [t_2/x]T'\).
  By assumption, the third premise of the second row, and transitivity of
  convertibility, we have \([t_1/x]T' \cong T_4\) (\(T_4\) is the type we must
  check the entire expression against).
  We use the rule to conclude (\(\textit{dom}(\Gamma_1,X_1:\kappa_1,\Gamma_2) =
  \textit{dom}(\Gamma_1,X_1:\kappa_1',\Gamma_2)\)), noting that the measure is
  preserved.
  
  \startcase{.2cm}
  \[
    \infer{
      \Gamma_1,X_1:\kappa_1,\Gamma_2 \vdash [ t_1 , t_2 ] \tpcheck T_3
    }{
      \begin{array}{cc}
        T_3 \leadstocs \abs{\iota}{x}{T_1}{T_1}
        & \Gamma_1,X_1:\kappa_1,\Gamma_2 \vdash t_1 \tpcheck T_1
        \\ \Gamma_1,X_1:\kappa_1,\Gamma_2 \vdash t_2 \tpcheck [t/x]^{T_1}\ T_2
        & |t_1| =_{\beta\eta} |t_2|
      \end{array}
    }
  \]
  From the first premise, assumption, and transitivity of the convertibility
  relation, \(T_4 \cong \abs{\iota}{x}{T_1}{T_2}\).
  This means that \(T_4 \leadstocs \abs{\iota}{x}{T_1'}{T_2'}\) for some
  \(T_1' \cong T_1\) and \(T_2' \cong T_2\).
  From this last congruence, we have \([t_1/x]^{T_1'}T_2' \cong [t_1/x]^{T_1} T_2\).
  By the IH on the second premise, \(\Gamma_1,X_1:\kappa_1',\Gamma_2 \vdash t_1
  \tpcheck T_1'\).
  By the IH on the first premise, second row, we have
  \(\Gamma_1,X_1:\kappa_1',\Gamma_2 \vdash t_2 \tpcheck [t/x]T_2'\).
  We use the rule to conclude, noting that the measure is preserved.
  
  \startcase{.2cm}
  \[
    \infer{
      \Gamma_1,X_1:\kappa_1,\Gamma_2 \vdash t\ t' \tpsynth [t'/x]^{T'}T
    }{
      \Gamma_1,X_1:\kappa_1,\Gamma_2 \vdash t \tpsynthleads \abs{\Pi}{x}{T'}{T}
      & \Gamma_1,X_1:\kappa_1,\Gamma_2\vdash t' \tpcheck T'
    }
  \]
  By the IH on the first premise, \(\Gamma_1,X_1:\kappa_1',\Gamma_2 \vdash t
  \tpsynthleads \abs{\Pi}{x}{T'}{T}\).
  By the IH on the second premise, \(\Gamma_1,X_1:\kappa_1,\Gamma_2 \vdash t'
  \tpcheck T'\).
  By the rule, \(\Gamma_1,X_1:\kappa_1',\Gamma_2 \vdash t\ t' \tpsynth [t'/x]T\),
  and the measure is preserved.
  
  \startcase{.2cm}
  \[
    \infer{
      \Gamma_1,X_1:\kappa_1,\Gamma_2 \vdash t \cdot T' \tpsynth [T'/X]T
    }{
      \Gamma_1,X_1:\kappa_1,\Gamma_2 \vdash t \tpsynthleads
      \abs{\forall}{X}{\kappa}{T}
      & \Gamma_1,X_1:\kappa_1,\Gamma_2 \vdash T' \tpsynth \kappa'
      & \kappa'\cong\kappa
    }
  \]
  By the IH on the first premise, \(\Gamma_1,X_1:\kappa_1',\Gamma_2 \vdash t
  \tpsynthleads \abs{\forall}{X}{\kappa}{T}\).
  By the IH on the second premise, \(\Gamma_1,X_1:\kappa_1',\Gamma_2 \vdash T'
  \tpsynth \kappa''\) for some \(\kappa'' \cong \kappa'\), from which we obtain
  \(\kappa'' \cong \kappa\).
  By the rule, \(\Gamma_1,X_1:\kappa_1',\Gamma_2 \vdash t \cdot T' \tpsynth
  [T'/X]T\), and the measure is preserved.
\end{proof}

\begin{lemma}
  \label{lem:ctxt-conv-class2}
  Let \(T_1,T_2\) be types such that \(T_1 \cong T_2\).
  \begin{itemize}
  \item If \(\Gamma_1,x:T_1,\Gamma_2 \vdash \kappa\) then
    \(\Gamma_1,x:T_2,\Gamma_2 \vdash \kappa\) 
    
  \item If \(\Gamma_1,x:T_1,\Gamma_2 \vdash T \tpsynth \kappa_1\) then
    \(\Gamma_1,x:T_2,\Gamma_2 \vdash T \tpsynth \kappa_1\)

  \item If \(T_3 \cong T_4\) and \(\Gamma_1,x:T_1,\Gamma_2 \vdash t \tpcheck T_3\) then
    \(\Gamma_1,x:T_2,\Gamma_2 \vdash t \tpcheck T_4\)
    
  \item If \(\Gamma_1,x:T_1,\Gamma_2 \vdash t \tpsynth T\) then
    \(\Gamma_1,x:T_2,\Gamma_2 \vdash t \tpsynth T'\) for some \(T' \cong T\).

  \end{itemize}
  Furthermore, the resulting typing derivations have depths no larger than the assumed ones.
\end{lemma}
\begin{proof}
  By mutual induction on the assumed derivation (specifically, its depth), and
  mutually with Lemma~\ref{lem:ctxt-conv-class}.
  The measure, which omits the size of derivations of the convertibility
  relation, is used to ensure that each mutually inductive call is well-founded.
  We show the interesting cases.

  \startcase{.2cm}
  \[
    \infer{
      \Gamma_1,x_1:T_1,\Gamma_2 \vdash \abs{\lambda}{x}{T}{T'} \tpsynth \abs{\Pi}{x}{T}{\kappa}
    }{
      \Gamma_1,x_1:T_1,\Gamma_2 \vdash T \tpsynth \star
      & \Gamma_1,x_1:T_1,\Gamma_2,x:T \vdash T' \tpsynth \kappa
    }
  \]
  By the IH on the first premise, \(\Gamma_1,x_1:T_2,\Gamma_2 \vdash T \tpsynth
  \star\).
  By the IH on the second premise, \(\Gamma_1,x_1:T_2,\Gamma_2,x:T \vdash T'
  \tpsynth \kappa\).
  We also have that the depths of the resulting typing derivations are no greater than
  the those used to derive the premises, so we conclude using the rule.

  \startcase{.2cm}
  \[
    \infer{
      \Gamma_1,x_1:T_1,\Gamma_2\vdash\abs{\lambda}{X}{\kappa}{T'} \tpsynth
      \abs{\Pi}{X}{\kappa}{\kappa'}
    }{
      \Gamma_1,x_1:T_1,\Gamma_2 \vdash \kappa & \Gamma_1,x_1:T_1,\Gamma_2,X:\kappa\vdash T'\tpsynth\kappa'
    }
  \]
  By the IH on the first premise, \(\Gamma_1,x_1:T_2,\Gamma_2 \vdash T \tpsynth
  \kappa\).
  By the IH on the second premise, \(\Gamma_1,x_1:T_2,\Gamma_2,X:\kappa \vdash
  T' \tpsynth \kappa'\).
  We also have that the depths of the resulting derivations are no greater than
  the those used to derive the premises, so we conclude using the rule.

  \startcase{.2cm}
  \[
    \infer{
      \Gamma_1,x_1:T_1,\Gamma_2\vdash T\ t \tpsynth [t/x]^{T'}\kappa
    }{
      \Gamma_1,x_1:T_1,\Gamma_2 \vdash T \tpsynth \abs{\Pi}{x}{T'}{\kappa}
      & \Gamma_1,x_1:T_1,\Gamma_2 \vdash t \tpcheck T'
    }
  \]
  By the IH on the first premise, \(\Gamma_1,x_1:T_2,\Gamma_2 \vdash T \tpsynth
  \abs{\Pi}{x}{T'}{\kappa}\).
  By the IH on the second premise, \(\Gamma_2,x_1:T_2,\Gamma_2 \vdash t \tpcheck
  T'\).
  We also have that the depths of the resulting derivations are no greater than
  the those used to derive the premises, so we conclude using the rule.

  \startcase{.2cm}
  \[
    \infer{
      \Gamma_1,x_1:T_1,\Gamma_2 \vdash T \cdot T' \tpsynth [T'/X]\kappa
    }{
      \Gamma_1,x_1:T_1,\Gamma_2\vdash T \tpsynth \abs{\Pi}{X}{\kappa'}{\kappa}
      & \Gamma_1,x_1:T_1,\Gamma_2 \vdash T_2 \tpsynth \kappa'' & \kappa'' \cong \kappa'
    }
  \]
  By the IH on the first premise, \(\Gamma_1,x_1:T_2,\Gamma_2 \vdash T \tpsynth
  \abs{\Pi}{X}{\kappa'}{\kappa}\).
  By the IH on the second premise, \(\Gamma_1,x_1:T_2,\Gamma_2 \vdash T_2
  \tpsynth \kappa''\).
  We also have that the depths of the resulting derivations are no greater than
  the those used to derive the premises, so we conclude using the rule, keeping
  the third premise.

  \startcase{.2cm}
  \[
    \infer{
      \Gamma_1,x_1:T_1,\Gamma_2 \vdash t\tpcheck T
    }{
      \Gamma_1,x_1:T_1,\Gamma_2 \vdash t\tpsynth T'
      & T' \cong T
    }
  \]
  By the IH on the first premise, \(\Gamma_1,x_1:T_2,\Gamma_2 \vdash t \tpsynth
  T''\) for some \(T'' \cong T'\).
  By transitivity of convertibility and the second premise, \(T'' \cong T\).
  By transitivity again, \(T'' \cong T_4\) (where \(T_4 \cong T\) is given to
  us).
  We also have that the depths of the resulting derivations are no greater than
  the those used to derive the premises, so we conclude using the rule.
  
  \startcase{.2cm}
  \[
    \infer{
      \Gamma_1,x_1:T_1,\Gamma_2 \vdash \absu{\lambda}{x}{t} \tpcheck T
    }{
      T \leadstocs \abs{\Pi}{x}{T_5}{T_6}
      & \Gamma_1,x_1:T_1,\Gamma_2,x:T_5\vdash t\tpcheck T_6
    }
  \]
  From the first premise, we have \(T \cong \abs{\Pi}{x}{T_5}{T_6}\), so by
  transitivity of convertibility and assumption, \(T_4 \cong
  \abs{\Pi}{x}{T_5}{T_6}\).
  This means that \(T_4 \leadstocs \abs{\Pi}{x}{T_5'}{T_6'}\) for some
  \(T_5' \cong T_5\) and \(T_6' \cong T_6\).
  By one use of the IH on the first premise, we have
  \(\Gamma_1,x_1:T_2,\Gamma_2,x:T_5 \vdash t \tpcheck T_6'\).
  By another use, we have \(\Gamma_1,x_1:T_2,\Gamma_2,x:T_5' \vdash t \tpcheck
  T_6'\) (this is well-founded, because the depth of the typing derivation this
  second use is applied to is no greater that the depth of the premise).
  We also have that the depths of the resulting derivations are no greater than
  the those used to derive the premises, so we conclude using the rule.

  \startcase{.2cm}
  \[
    \infer{
      \Gamma_1,x_1:T_1,\Gamma_2 \vdash \absu{\Lambda}{X}{t} \tpcheck T'
    }{
      T' \leadstocs \abs{\forall}{X}{\kappa}{T}
      & \Gamma_1,x_1:T_1,\Gamma_2,X:\kappa \vdash t \tpcheck T
    }
  \]
  From the first premise, \(T' \cong \abs{\forall}{X}{\kappa}{T}\).
  From transitivity of convertibility and assumption, \(T_4 \cong
  \abs{\forall}{X}{\kappa}{T}\).
  This means that \(T_4 \leadstocs \abs{\forall}{X}{\kappa''}{T''}\) for
  some \(\kappa'' \cong \kappa\) and \(T'' \cong T\).
  By mutual induction with Lemma~\ref{lem:ctxt-conv-class} on the second premise,
  \(\Gamma_1,x_1:T_1,\Gamma_2,X:\kappa'' \vdash t \tpcheck T\), and this
  derivation is no deeper than the second premise.
  By the IH, \(\Gamma_1,x_1:T_1,\Gamma_2,X:\kappa'' \vdash t \tpcheck T''\), and
  this derivation is no deeper than the second premise.
  We conclude using the rule, with the measure preserved.

  \startcase{.2cm}
  \[
    \infer{
      \Gamma_1,x_1:T_1,\Gamma_2 \vdash [ t , t' ] \tpcheck T_3
    }{
      \begin{array}{cc}
        T_3 \leadstocs \abs{\iota}{x}{T}{T'}
        & \Gamma_1,x_1:T_1,\Gamma_2 \vdash t \tpcheck T
        \\ \Gamma_1,x_1:T_1,\Gamma_2 \vdash t' \tpcheck [t/x]^{T}\ T'
        & |t| =_{\beta\eta} |t|
      \end{array}
    }
  \]
  From the first premise, \(T_3 \cong \abs{\iota}{x}{T}{T'}\).
  From transitivity of convertibility and assumption, \(T_4 \cong
  \abs{\iota}{x}{T}{T'}\).
  This means \(T_4 \leadstocs \abs{\iota}{x}{T''}{T'''}\) for some \(T'' \cong T\)
  and \(T''' \cong T'\).
  By the IH, \(\Gamma_1,x_1:T_2,\Gamma_2 \vdash t \tpcheck T''\) with depth no
  greater than the second premise (read left to right, then top to bottom).
  By the IH, \(\Gamma_1,x_1:T_2,\Gamma_2 \vdash t' \tpcheck T'''\) with depth no
  greater than the third premise.
  We conclude using the rule, keeping the fourth premise as-is and noting the
  measure is preserved.

  \startcase{.2cm}
  \[
    \infer{
      \Gamma_1,x_1:T_1,\Gamma_2 \vdash \beta\{t'\} \tpcheck T_3
    }{
      T_3 \leadstocs \{t_1 \simeq t_2\}
      \quad \textit{FV}(t')\subseteq \textit{dom}(\Gamma_1,x_1:T_1,\Gamma_2)
      \quad |t_1| =_{\beta\eta} |t_2|
    }
  \]
  From the first premise, \(T_3 \cong \{t_1 \simeq t_2\}\).
  From transitivity of the convertibility relation and assumption, \(T_3 \cong
  \{t_1 \simeq t_2\}\).
  This means \(T_4 \leadstocs \{t_3 \simeq t_4\}\) for some \(t_3,t_4\)
  such that \(|t_3| =_{\beta\eta} |t_1|\) and \(|t_4| =_{\beta\eta} |t_2|\).
  From the third premise and transitivity of convertibility, we have \(|t_3|
  =_{\beta\eta} |t_4|\).
  From the second premise, we obtain that \(\textit{FV}(t') \subseteq
  \textit{dom}(\Gamma_1,x_1:T_2,\Gamma_2)\).
  We conclude using the rule, noting that the depth remains 1.

  \startcase{.2cm}
  \[
    \infer{\Gamma_1,x_1:T_1,\Gamma_2\vdash \varphi\ t\ \mhyph\ t'\ \{t''\} \Leftrightarrow T_3}
    {
      \Gamma_1,x_1:T_1,\Gamma_2\vdash t\tpcheck \{t' \simeq t''\}
      & \Gamma_1,x_1:T_1,\Gamma_2\vdash t' \Leftrightarrow T_3
      & \textit{FV}(t'') \subseteq \textit{dom}(\Gamma_1,x_1:T_1,\Gamma_2)
    }
  \]
  By the IH on the first premise (first row), \(\Gamma_1,x_1:T_2,\Gamma_2 \vdash
  t \tpcheck \{t' \simeq t''\}\) (with no greater depth).
  If this is type synthesis, then by the second premise
  \(\Gamma_1,x_1:T_2,\Gamma_2 \vdash t' \tpsynth T_4\) for some \(T_4 \cong
  T_3\).
  If this is type checking, by the second premise \(\Gamma_1,x_1:T_2,\Gamma_2
  \vdash t' \tpcheck T_4\) since \(T_4\) (given to us in this case) is
  convertible with \(T_3\).
  Either way, we use the rule to conclude (note
  \(\textit{dom}(\Gamma_1,x_1:T_1,\Gamma_2) =
  \textit{dom}(\Gamma_1,x_1:T_2,\Gamma_2)\)), and the measure is preserved.

  \startcase{.2cm}
  \[
  \infer{\Gamma_1,x_1:T_1,\Gamma_2 \vdash \rho\ t\ @x\langle t_2 \rangle.T'\ \mhyph\ t' \tpcheck T}
  {
    \begin{array}{ccc}
      \Gamma_1,x_1:T_1,\Gamma_2 \vdash t \tpsynthleads \{t_1 \simeq t_2'\}
      & \textit{FV}(t_2) \subseteq \textit{dom}(\Gamma_1,x_1:T_1,\Gamma_2)
      & |t_2'| =_{\beta\eta} |t_2|
      \\ \Gamma_1,x_1:T_1,\Gamma_2 \vdash [t_2/x] T' \tpsynth \star
      & \Gamma_1,x_1:T_1,\Gamma_2 \vdash t' \tpcheck [t_2/x]T'
      & [t_1/x]T' \cong T
    \end{array}
  }
  \]
  By the IH on the first premise, \(\Gamma_1,x_1:T_2,\Gamma_2 \vdash t \tpsynth
  T''\) (with no greater depth) for some \(T'' \cong \{t_1 \simeq t_2'\}\).
  That means \(T'' \leadstocs \{t_3 \simeq t_4\}\) for some \(t_3,t_4\)
  such that \(|t_3| =_{\beta\eta} |t_1|\) and \(|t_4| =_{\beta\eta} |t_2'|\).
  So we further obtain that \(|t_4| =_{\beta\eta} |t_2|\).
  By the IH on the first premise of the second row, \(\Gamma_1,x_1:T_2,\Gamma_2
  \vdash [t_2/x]T \tpsynth \star\).
  By the IH on the second premise of the second row, \(\Gamma,x_1:T_2,\Gamma_2
  \vdash t \tpcheck [t_2/x]T'\).
  By assumption, the third premise of the second row, and transitivity of
  convertibility, we have \([t_1/x]T' \cong T_4\).
  We conclude with the rule, preserving the measure.

  \startcase{.2cm}
  \[
    \infer{\Gamma_1,x_1:T_1,\Gamma_2\vdash \delta\ \mhyph\ t \tpcheck T_3}
    {
      \Gamma_1,x_1 : T_1,\Gamma_2 \vdash t \tpsynth T'
      \quad T' \cong \{\absu{\lambda}{x}{\absu{\lambda}{y}{x}} \simeq \absu{\lambda}{x}{\absu{\lambda}{y}{y}}\}
    }
  \]
  By the IH on the first premise, \(\Gamma_1,x:T_2,\Gamma_2 \vdash t \tpsynth
  T''\) for some \(T'' \cong T'\).
  By transitivity of convertibility and the second premise, \(T'' \cong
  \{\absu{\lambda}{x}{\absu{\lambda}{y}{x}} \simeq
  \absu{\lambda}{x}{\absu{\lambda}{y}{y}}\}\).
  Using the \(\delta\) rule, we conclude that \(\Gamma_1,x_1:T_2,\Gamma_2 \vdash
  \delta\ \mhyph\ t \tpcheck T_4\).
  
  \startcase{.2cm}
  \[
    \infer{
      \Gamma_1,x_1:T_1,\Gamma_2 \vdash x \tpsynth T
    }{
      (x : T)\in\Gamma_1,x_1 : T_1,\Gamma_2
    }
  \]
  We have two subcases to consider.
  If \(x = x_1\), then we use the rule to produce a derivation of
  \(\Gamma_1,x_1:T_2,\Gamma_2 \vdash x \tpsynth T_2\), where by assumption \(T_1 \cong T_2\).
  Otherwise, we use the rule to produce a derivation of
  \(\Gamma_1,x_1:T_2,\Gamma_2 \vdash x \tpsynth T\).
  In both cases, the measure (depth of 1) is preserved.

  \startcase{.2cm}
  \[
    \infer{
      \Gamma_1,x_1:T_1,\Gamma_2 \vdash t\ t' \tpsynth [t'/x]^{T'}T
    }{
      \Gamma_1,x_1:T_1,\Gamma_2 \vdash t \tpsynthleads \abs{\Pi}{x}{T'}{T}
      & \Gamma_1,x_1:T_1,\Gamma_2 \vdash t' \tpcheck T'
    }
  \]
  By the IH on the premise, \(\Gamma_1,x_1:T_2,\Gamma_2 \vdash t \tpsynth T_3\)
  for some \(T_3 \cong \abs{\Pi}{x}{T'}{T}\), and the depth of this typing
  derivation is no greater than that of the first premise.
  From this we obtain that \(T_3 \leadstocs \abs{\Pi}{x}{T_4'}{T_4}\) for
  some \(T_4' \cong T'\) and \(T_4 \cong T\).
  By the IH on the last premise, \(\Gamma_1,x_1:T_2,\Gamma_2 \vdash t' \tpcheck
  T_4'\), and the depth of this derivation is no greater than that of the last premise.
  We conclude with the rule to obtain \(\Gamma_1,x_1:T_2,\Gamma_2 \vdash t\ t'
  \tpsynth [t'/x]^{T_4'}\ T_4\), with the measure preserved (\([t'/x]^{T_4'}\
  T_4\) is convertible with \([t'/x]^{T'}\ T\), since \(|\chi\ T_4'\ \mhyph\ t'|
  =_{\beta\eta} |\chi\ T'\ \mhyph\ t'|\) by erasure and \(T \cong T_4\)).

  \startcase{.2cm}
  \[
    \infer{
      \Gamma_1,x_1:T_1,\Gamma_2 \vdash t \cdot T' \tpsynth [T'/X]T
    }{
      \Gamma_1,x_1:T_1,\Gamma_2 \vdash t \tpsynthleads
      \abs{\forall}{X}{\kappa}{T}
      & \Gamma_1,x_1:T_1,\Gamma_2 \vdash T' \tpsynth \kappa'
      & \kappa'\cong\kappa
    }
  \]

  By the IH on the first premise, \(\Gamma_1,x_1:T_2,\Gamma_2 \vdash t \tpsynth
  T_3\) for some \(T_3 \cong \abs{\forall}{X}{\kappa}{T}\), with no greater
  depth.
  This means \(T_3 \leadstocs \abs{\forall}{X}{\kappa''}{T''}\) for some
  \(\kappa'' \cong \kappa\) and \(T'' \cong T\).
  By the IH on the second premise, \(\Gamma_1,x_1:T_2,\Gamma_2 \vdash T'
  \tpsynth \kappa'\), with no greater depth.
  By transitivity of convertibility, \(\kappa' \cong \kappa''\).
  We use the rule to conclude \(\Gamma_1,x_1:T_2,\Gamma_2 \vdash t \cdot T'
  \tpsynth [T'/X]T''\), with the measure preserved and the synthesized type in
  the conclusion convertible with \([T'/X]T\) (since \(T'' \cong T\)).

  \startcase{.2cm}
  \[
    \infer{
      \Gamma_1,x_1:T_1,\Gamma_2 \vdash \chi\ T\ \mhyph\ t\ \tpsynth T
    }{
      \Gamma_1,x_1:T_1,\Gamma_2 \vdash T \tpsynth \star
      & \Gamma_1,x_1:T_1,\Gamma_2 \vdash t \tpcheck T
    }
  \]
  By the IH on the first premise, \(\Gamma_1,x_1:T_2,\Gamma_2 \vdash T \tpsynth
  \star\) with no greater depth.
  By the IH on the second premise \(\Gamma_1,x_1:T_2,\Gamma_2 \vdash t \tpcheck
  T\) at no greater depth.
  We conclude using the rule, with the measure preserved.
\end{proof}

\begin{corollary}
  \label{cor:ctxt-conv-class}
  If \(\vdash \Gamma_1,X:\kappa_1,\Gamma_2\) and \(\Gamma_1 \vdash \kappa_1'\)
  with \(\kappa_1' \cong \kappa_1\) then \(\vdash \Gamma_1,X:\kappa_1',\Gamma_2\).
\end{corollary}

\begin{corollary}
  \label{cor:ctxt-conv-class2}
  If \(\vdash \Gamma_1,x_1:T_1,\Gamma_2\) and \(\Gamma_1 \vdash T_2 \tpsynth
  \star\) with \(T_1 \cong T_2\) then \(\vdash \Gamma_1,x_1:T_2,\Gamma_2\).
\end{corollary}

\begin{lemma}
  \label{lem:subst-class}
  Below, each statement separately universally quantifies over meta-variables,
  and it is assumed that typing contexts occurring in assumed derivations are
  well-formed.
  \begin{enumerate}
  \item \textbf{Kinds:}
    \begin{itemize}
    \item If \(\Gamma_1,x:T,\Gamma_2 \vdash \kappa\) and \(\Gamma_1 \vdash t
      \tpsynth T\) then \(\Gamma_1,[t/x]\Gamma_2 \vdash [t/x] \kappa\)
      
    \item If \(\Gamma_1,X:\kappa',\Gamma_2 \vdash \kappa\) and \(\Gamma_1 \vdash
      T \tpsynth \kappa'\) then \(\Gamma_1,[T/X]\Gamma_2 \vdash [T/X]\kappa\)
    \end{itemize}
    
  \item \textbf{Types}
    \begin{itemize}
    \item If \(\Gamma_1,x:T',\Gamma_2 \vdash T \tpsynth \kappa\) and \(\Gamma_1 \vdash t
      \tpsynth T'\) then \(\Gamma_1,[t/x]\Gamma_2 \vdash [t/x]T \tpsynth [t/x]\kappa\)
      
    \item If \(\Gamma_1,X:\kappa_2,\Gamma_2 \vdash T_1 \tpsynth \kappa_1\) and \(\Gamma_1
      \vdash T_2 \tpsynth \kappa_2\) then \(\Gamma_1,[T_2/X]\Gamma_2 \vdash [T_2/X]T_1 \tpsynth [T_2/X]\kappa_1\)
    \end{itemize}
    
  \item \textbf{Terms:}
    \begin{itemize}
    \item If \(\Gamma_1,x:T',\Gamma_2 \vdash t \Leftrightarrow T\) and
      \(\Gamma_1 \vdash t' \tpsynth T'\) then \(\Gamma_1,[t'/x]\Gamma_2 \vdash [t'/x]t
      \Leftrightarrow [t'/x]T\)
     
    \item If \(\Gamma_1,X:\kappa,\Gamma_2 \vdash t \Leftrightarrow T'\) and
      \(\Gamma_1 \vdash T \tpsynth \kappa\) then \(\Gamma_1,[T/X]\Gamma_2 \vdash
      [T/X]t \Leftrightarrow [T/X]T'\)
    \end{itemize}
  \end{enumerate}
\end{lemma}
\begin{proof}
  By mutual induction on the assumed derivations.
  We only show a few interesting cases, and we omit type annotations on
  substitutions when these are clear from context.
  
  \startcase{.2cm}
  \[
    \infer{\Gamma_1,X:\kappa,\Gamma_2 \vdash X_1 \tpsynth \kappa_1}{(X_1 : \kappa_1) \in \Gamma_1,X:\kappa,\Gamma_2}
  \]

  We have two cases.
  If \(X_1 = X\), then \(\kappa_1 = \kappa\) and by assumption \(\Gamma_1 \vdash
  T \tpsynth \kappa\), and the desired result holds by weakening.
  Otherwise, either \((X_1 : \kappa_1) \in \Gamma_1\) and \(X \notin
  \textit{FV}(\kappa_1)\) (which we obtain from the assumption that the typing
  context is well-formed), or else \((X_1 : \kappa_1) \in \Gamma_2\). 
  Either way, we have \(\Gamma_1,[T/X]\Gamma_2 \vdash X_1 \tpsynth
  [T/X]\kappa_1\).

  \startcase{.2cm}
  \[ \infer{\Gamma_1,x:T,\Gamma_2\vdash T_2\ t_1 \tpsynth [t_1/x_1]^{T_1}\kappa
    }{
      \Gamma_1,x:T,\Gamma_2\vdash T_2 \tpsynth \abs{\Pi}{x_1}{T_1}{\kappa}
      & \Gamma_1,x:T,\Gamma_2\vdash t_1 \tpcheck T_1
    }
  \]

  By the IH, \(\Gamma_1,[t/x]\Gamma_2 \vdash [t/x]T_2 \tpsynth
  \abs{\Pi}{x_1}{[t/x]T_1}{[t/x]\kappa}\) and \(\Gamma_1,[t/x]\Gamma_2 \vdash
  [t/x]t_1 \tpcheck [t/x]T_1\).
  By the rule, we have \(\Gamma_1,[t/x]\Gamma_2 \vdash [t/x]T_2\ [t/x]t_1
  \tpsynth [[t/x]t_1/x_1][t/x]\kappa\), where the synthesized kind is equal to
  the desired \([t/x][t_1/x_1]\kappa\).

  \startcase{.2cm}
  \[
    \infer{
      \Gamma_1,x:T,\Gamma_2\vdash \{ t_1 \simeq t_2 \} : \star
    }{
      \textit{FV}(t_1\ t_2)\subseteq\textit{dom}(\Gamma_1,x:T,\Gamma_2)
    }
  \]
  It suffices to show that \(\textit{FV}([t/x]t_1\ [t/x]t_2) \subseteq
  \textit{dom}(\Gamma_1,[t/x]\Gamma_2)\).
  It is clear \(x\) is not a free variable of this expression, that the free
  variables of \(t\) are declared in \(\Gamma_1\), and by assumption the other
  free variables of it are declared in \(\Gamma_1,[t/x]\Gamma_2\).

  \startcase{.2cm}
  \[
    \infer{\Gamma_1,x:T,\Gamma_2 \vdash x_1\tpsynth T_1}{(x_1 : T_1) \in \Gamma_1,x:T,\Gamma_2}
  \]
  We elaborate on the case where \(x_1 = x\).
  It is important that we assumed that \(\Gamma_1 \vdash t \tpsynth T\) (as
  opposed to having its type checked), as we may now replace the given rule with
  this assumed derivation without changing the definition of substitution.

  \startcase{.2cm}
  \[
    \infer{\Gamma_1,x:T,\Gamma_2\vdash \beta\{t'\} \tpcheck \{ t_1 \simeq t_2 \}}
    {\textit{FV}(t')\subseteq \textit{dom}(\Gamma_1,x:T,\Gamma_2) \quad |t_1| =_{\beta\eta} |t_2|}
  \]

  From our assumptions we may conclude \(\textit{FV}([t/x]t') \subseteq
  \textit{dom}(\Gamma_1,[t/x]\Gamma_2)\), and from the second premise that
  \(|[t/x]t_1| =_{\beta\eta} |[t/x]t_2|\).
  Thus, \(\Gamma_1,[t/x]\Gamma_2\ \vdash \beta\{[t/x]t'\}\tpcheck \{[t/x]t_1
  \simeq [t/x]t_2\}\).

  \startcase{.2cm}
  \[
    \infer{\Gamma_1,x:T,\Gamma_2 \vdash \rho\ t''\ @x_1\langle t_2 \rangle.T_2\ \mhyph\ t' \tpcheck T_1}
    {
      \begin{array}{ccc}
        \Gamma_1,x:T,\Gamma_2 \vdash t'' \tpsynthleads \{t_1 \simeq t_2'\}
        & \textit{FV}(t_2) \subseteq \textit{dom}(\Gamma_1,x:T,\Gamma_2)
        & |t_2'| =_{\beta\eta} |t_2|
        \\ \Gamma_1,x:T,\Gamma_2 \vdash [t_2/x_1] T_2 \tpsynth \star
        & \Gamma_1,x:T,\Gamma_2 \vdash t' \tpcheck [t_2/x_1]T_2
        & [t_1/x_1]T_2 \cong T_1
      \end{array}
    }
  \]
  From the IH, we have that \(\Gamma_1,[t/x]\Gamma_2 \vdash [t/x]t'' \tpsynth \{[t/x]t_1
  \simeq [t/x]t_2'\}\), that \(\Gamma_1,[t/x]\Gamma_2 \vdash [t/x][t_2/x_1]T_2
  \tpsynth \star\), and that \(\Gamma_1,[t/x]\Gamma_2 \vdash [t/x]t' \tpcheck
  [t/x][t_2/x_1]T_2\).
  From the last premise, we may conclude that \([t/x]T_1 \cong
  [t/x][t_1/x_1]T_1\).
  We also see from the premises that \(\textit{FV}([t/x]t_2) \subseteq
  \textit{dom}(\Gamma_1,[t/x]\Gamma_2)\) and that \(|[t/x]t_2'| =_{\beta\eta}
  |[t/x]t_2|\).
  Applying the rule and permuting substitutions gives us the desired result
  (note for example that \([t/x][t_2/x_1]T_1 = [[t/x]t_2/x_1][t/x]T_1\)).

  \startcase{.2cm}
  \[
    \infer{\Gamma_1,x:T,\Gamma_2 \vdash \delta\ \mhyph\ t \tpcheck T_2}
    {
      \Gamma_1,x:T,\Gamma_2 \vdash t \tpsynth T_2'
      \quad T_2' \cong \{\absu{\lambda}{x}{\absu{\lambda}{y}{x}} \simeq \absu{\lambda}{x}{\absu{\lambda}{y}{y}}\}
    }
  \]

  From the IH, we have that \(\Gamma_1[t'/x]\Gamma_2 \vdash [t'/x] t \tpsynth
  [t'/x]T_2'\).
  From the second premise, we have \([t'/x]T_2' \cong
  \{\absu{\lambda}{x}{\absu{\lambda}{y}{x} \simeq
    \absu{\lambda}{x}{\absu{\lambda}{y}{y}}}\}\).
  From the \(\delta\) rule, we conclude that
  \(\Gamma_1,[t'/x]\Gamma_2 \vdash \delta\ \mhyph\ [t'/x]t \tpcheck [t'/x]T_2\).
\end{proof}

\begin{corollary}
  \ \\
  \begin{itemize}
  \item If \(\vdash \Gamma_1,x:T,\Gamma_2\) and \(\Gamma_1 \vdash t \tpsynth T\)
    then \(\vdash \Gamma_1,[t/x]\Gamma_2\)
    
  \item If \(\vdash \Gamma_1,X:\kappa,\Gamma_2\) and \(\Gamma_1 \vdash T
    \tpsynth \kappa\) then \(\vdash \Gamma_1,[T/X]\Gamma_2\).
  \end{itemize}
\end{corollary}
\begin{proof}
  By induction on the assumed derivation, appealing to
  Lemma~\ref{lem:subst-class} at each step.
\end{proof}

\subsection{Theorem~\ref{thm:syntactic-kind-pres}}
\begin{proof}[Proof (of Theorem~\ref{thm:syntactic-kind-pres})]
  We may rule out the cases where \(T\) is a variable or formed by a type
  constructor, as this would contradict the assumption that \(T \leadstoc
  T'\) for some \(T'\).
  We omit type annotations from substitutions when they are clear from the context.

  \startcase{.2cm}
  \[
    \infer{\Gamma\vdash T\ t \tpsynth [t/x]^{T_1}\kappa_1}{\Gamma\vdash T \tpsynth
      \abs{\Pi}{x}{T_1}{\kappa_1} & \Gamma\vdash t \tpcheck T_1}
  \]

  There are two subcases to consider for the derivation of \(T\ t \leadstoc
  T'\). In the first case, we have \(T \leadstoc T''\) for some \(T''\) (so
  \(T' = T''\ t\)). By the IH, we have \(\Gamma \vdash T'' \tpsynth
  \abs{\Pi}{x}{T_1'}{\kappa_1'}\) for some \(T_1' \cong T_1\) and \(\kappa_1'
  \cong \kappa_1\) (we have from that IH that kind of \(T''\) must be
  convertible with \(\abs{\Pi}{x}{T_1}{\kappa_1}\)). By
  Lemma~\ref{lem:ctxt-conv-class2}, \(\Gamma \vdash t \tpcheck T_1'\).
  By the rule, \(\Gamma \vdash T''\ t \tpsynth [t/x]\kappa_1'\), and the
  synthesized kind is clearly convertible with \([t/x]\kappa_1\).

  In the other subcase, the subject of kinding is of the form
  \((\abs{\lambda}{x}{T_1}{T''})\ t\) and our assumed reduction is to
  \([t/x]T''\). By inversion of the kinding derivation, we have \(\Gamma,x:T_1
  \vdash T'' \tpsynth \kappa_1\). By Lemma~\ref{lem:subst-class}, we have
  \(\Gamma \vdash [t/x]T'' \tpsynth [t/x]\kappa_1\).

  \startcase{.2cm}
  \[
    \infer{\Gamma\vdash T_1 \cdot T_2 \tpsynth [T_2/X]\kappa_1}{\Gamma\vdash T_1
      \tpsynth \abs{\Pi}{X}{\kappa_2}{\kappa_1} & \Gamma\vdash T_2 \tpsynth
      \kappa_2' & \kappa_2 \cong \kappa_2'}
  \]

  There are two subcases to consider for the derivation of \(T_1 \cdot T_2
  \leadstoc T'\). In the first subcase, we have \(T_1 \leadstoc
  T_1''\) for some \(T_1''\) (so \(T' = T_1'' \cdot T_2\)). By the IH, we have
  \(\Gamma \vdash T_1'' \tpsynth \abs{\Pi}{X}{\kappa_2''}{\kappa_1''}\) for some
  \(\kappa_2'' \cong \kappa_2\) and \(\kappa_1'' \cong \kappa_1\) (we have from
  the IH that the kind of \(T_1''\) must be convertible with
  \(\abs{\Pi}{X}{\kappa_2}{\kappa_1}\)).
  By transitivity, we can conclude \(\kappa_2'' \cong \kappa_2'\).
  By the rule, we have \(\Gamma \vdash T_1'' \cdot T_2 \tpsynth
  [T_2/X]\kappa_1''\), and the synthesized kind is clearly convertible with
  \([T_2/X]\kappa_1\).

  In the other subcase, the subject of kinding is of the form
  \((\abs{\lambda}{X}{\kappa_2}{T_1''}) \cdot T_2\) and our assumed reduction is
  to \([T_2/X]T_1''\).
  By inversion of the kinding derivation, we have \(\Gamma,X:\kappa_2 \vdash
  T_1'' \tpsynth \kappa_1\).
  By Lemma~\ref{lem:ctxt-conv-class}, we have \(\Gamma,X:\kappa_2' \vdash T_1''
  \tpsynth \kappa_1'\) for some \(\kappa_1' \cong \kappa_1\), and from this and
  Lemma~\ref{lem:subst-class} we have \(\Gamma \vdash [T_2/X]T_1'' \tpsynth
  [T_2/X]\kappa_1'\).
  This is clearly convertible with \([T_2/X]\kappa_1\), as desired.
\end{proof}

\begin{corollary}
  \label{cor:syntactic-kind-pres}
  If \(\Gamma \vdash T \tpsynth \kappa\) and \(T \leadstocs T'\) then
  \(\Gamma \vdash T' \tpsynth \kappa'\) for some \(\kappa' \cong \kappa\).
\end{corollary}

\subsection{Theorem~\ref{thm:judge-valid}}

\begin{proof}[Proof (of Theorem~\ref{thm:judge-valid})]
  By induction on the assumed derivation, making implicit use of
  Corollary~\ref{cor:syntactic-kind-pres} for the shorthand \(\Gamma \vdash t
  \tpsynthleads T\) and mostly omitting annotated substitutions when these are
  clear from context (we show the first one that occurs).
  We show a few interesting cases.

  \startcase{.2cm}
  \[
    \infer{\Gamma \vdash X \tpsynth \kappa}{(X : \kappa) \in \Gamma}
  \]
  By an easy inductive argument on the assumption \(\vdash \Gamma\) and
  weakening, we obtain \(\Gamma \vdash \kappa\).

  \startcase{.2cm}
  \[
    \infer{\Gamma\vdash\abs{\lambda}{x}{T}{T'} \tpsynth \abs{\Pi}{x}{T}{\kappa}}{\Gamma \vdash T \tpsynth \star & \Gamma,x:T\vdash T'\tpsynth\kappa}
  \]
  By assumption, the first premise, and the context formation rules, we have
  \(\vdash \Gamma,x:T\).
  By the IH, we have \(\Gamma,x:T \vdash \kappa\).
  Thus, we obtain \(\Gamma \vdash \abs{\Pi}{x}{T}{\kappa}\).

  \startcase{.2cm}
  \[
    \infer{\Gamma\vdash\abs{\lambda}{X}{\kappa}{T'} \tpsynth \abs{\Pi}{X}{\kappa}{\kappa'}}{\Gamma \vdash \kappa & \Gamma,X:\kappa\vdash T'\tpsynth\kappa'}
  \]
  By assumption, the first premise, and the context formation rules, we have
  \(\vdash \Gamma,X:\kappa\).
  From this and the IH on the second premise, we have \(\Gamma,X:\kappa \vdash
  \kappa'\).
  We can then conclude that \(\Gamma \vdash \abs{\Pi}{X}{\kappa}{\kappa'}\)

  \startcase{.2cm}
  \[
    \infer{\Gamma\vdash T_1 \cdot T_2 \tpsynth [T_2/X]\kappa_1}{\Gamma\vdash T_1 \tpsynth \abs{\Pi}{X}{\kappa_2}{\kappa_1} & \Gamma\vdash T_2 \tpsynth \kappa_2' & \kappa_2 \cong \kappa_2'}
  \]
  By the IH on the first premise, we have \(\Gamma \vdash
  \abs{\Pi}{X}{\kappa_2}{\kappa_1}\), and by inversion this gives us \(\Gamma
  \vdash \kappa_2\), which yields \(\vdash \Gamma,X:\kappa_2\), and
  \(\Gamma,X:\kappa_2 \vdash \kappa_1\).
  By the IH on the second premise, \(\Gamma \vdash \kappa_2'\).
  By Lemma~\ref{lem:ctxt-conv-class} using the third premise,
  \(\Gamma,X:\kappa_2' \vdash \kappa_1\).
  By Lemma~\ref{lem:subst-class}, \(\Gamma \vdash [T_2/X]\kappa_1\).

  \startcase{.2cm}
  \[
    \infer{\Gamma\vdash T\ t \tpsynth [t/x]^{T'}\kappa}{\Gamma\vdash T \tpsynth \abs{\Pi}{x}{T'}{\kappa} & \Gamma\vdash t \tpcheck T'}
  \]
  By the IH on the first premise, \(\Gamma \vdash \abs{\Pi}{x}{T'}{\kappa}\).
  By inversion, \(\Gamma \vdash T' \tpsynth \star\) and \(\Gamma,x:T' \vdash
  \kappa\).
  So, from the first of these and by context formation \(\vdash \Gamma,x:T'\).
  From the second premise and the \(\chi\) rule, \(\Gamma \vdash \chi\ T\
  \mhyph\ t \tpsynth T\).
  Finally, by Lemma~\ref{lem:subst-class}, \(\Gamma \vdash [t/x]^{T} \kappa\).

  \startcase{.2cm}
  \[
    \infer{\Gamma\vdash t\ t' \tpsynth [t'/x]^{T'}T}{\Gamma\vdash t \tpsynthleads \abs{\Pi}{x}{T'}{T} & \Gamma\vdash t' \tpcheck T'}
  \]
  By the IH on the first premise, \(\Gamma \vdash \abs{\Pi}{x}{T'}{T} \tpsynth
  \star\).
  By inversion of this, we obtain both \(\Gamma \vdash T' \tpsynth \star\), and
  that \(\Gamma,x:T' \vdash T \tpsynth \star\).
  From the first of these and the context formation rules, we have \(\vdash
  \Gamma,x:T'\).
  Finally, from Lemma~\ref{lem:subst-class} we have \(\Gamma \vdash [t'/x]T
  \tpsynth \star\).

  \startcase{.2cm}
  \[
    \infer{
      \Gamma\vdash t \cdot T' \tpsynth [T'/X]T
    }{
      \Gamma\vdash t \tpsynthleads \abs{\forall}{X}{\kappa}{T} & \Gamma\vdash T'
      \tpsynth \kappa' & \kappa'\cong\kappa
    }
  \]
  From the IH on the first premise, \(\Gamma \vdash \abs{\forall}{X}{\kappa}{T}
  \tpsynth \star\).
  By inversion of this, we have both \(\Gamma \vdash \kappa\) and
  \(\Gamma,X:\kappa \vdash T \tpsynth \star\).
  From the first of these, context formation rules, and assumption we have
  \(\vdash \Gamma,X:\kappa\).
  By the IH on the second premise, we also have \(\Gamma \vdash \kappa'\), and
  from this we have \(\vdash \Gamma,X:\kappa'\).
  We combine this with the earlier derivation of \(\Gamma,X:\kappa \vdash T
  \tpsynth \star\), the third premise, and Lemma~\ref{lem:ctxt-conv-class} to
  obtain \(\Gamma,X:\kappa' \vdash T \tpsynth \star\).
  Finally, we use Lemma~\ref{lem:subst-class} to obtain \(\Gamma \vdash [T'/X]T
  \tpsynth \star\).


  \startcase{.2cm}
  \[
    \infer{\Gamma \vdash \chi\ T\ \mhyph\ t\ \tpsynth T}
    {\Gamma \vdash T \tpsynth \star & \Gamma \vdash t \tpcheck T}
  \]
  Give to us by the first premise.
\end{proof}

\end{document}